\newif\ifreview
\newif\ifarxiv
\definecolor{tableaublue}{HTML}{1F77B4}
\definecolor{tableauorange}{HTML}{FF7F0E}
\definecolor{tableaugreen}{HTML}{2CA02C}
\definecolor{tableaured}{HTML}{D62728}
\definecolor{tableaupurple}{HTML}{9467BD}
\definecolor{tableaubrown}{HTML}{8C564B}
\definecolor{tableaupink}{HTML}{E377C2}
\definecolor{tableaugrey}{HTML}{7F7F7F}
\definecolor{tableaulightgreen}{HTML}{BCBD22}
\definecolor{tableaucyan}{HTML}{17BECF}
\pgfplotsset{compat=newest}
\pgfplotsset{every axis legend/.append style={%
cells={anchor=west}}
}
\tikzset{>=stealth'}
\pgfplotsset{%
    layers/standard/.define layer set={%
        background,axis background,axis grid,axis ticks,axis lines,axis tick labels,pre main,main,axis descriptions,axis foreground%
    }{
        grid style={/pgfplots/on layer=axis grid},%
        tick style={/pgfplots/on layer=axis ticks},%
        axis line style={/pgfplots/on layer=axis lines},%
        label style={/pgfplots/on layer=axis descriptions},%
        legend style={/pgfplots/on layer=axis descriptions},%
        title style={/pgfplots/on layer=axis descriptions},%
        colorbar style={/pgfplots/on layer=axis descriptions},%
        ticklabel style={/pgfplots/on layer=axis tick labels},%
        axis background@ style={/pgfplots/on layer=axis background},%
        3d box foreground style={/pgfplots/on layer=axis foreground},%
    },
}
\definecolor{redish}{rgb}{0.9, 0.17, 0.31}
\definecolor{fuchs}{rgb}{0.57, 0.36, 0.51}
\theoremstyle{definition}
\newtheorem{theorem}{Theorem}
\newtheorem{corollary}{Corollary}
\newtheorem{claim}{Claim}
\newtheorem{definition}{Definition}
\newtheorem{example}{Example}
\newtheorem{notation}{Notation}
\newtheorem{observation}{Observation}
\newtheorem{lemma}{Lemma}
\newtheorem{construction}{Construction}
\newtheorem{problem}{Problem}
\newcommand{\mB}{\mathcal{B}}
\newcommand{\mG}{\mathcal{G}}
\newcommand{\mV}{\mathcal{V}}
\newcommand{\mE}{\mathcal{E}}
\newcommand{\mA}{\mathcal{A}}
\newcommand{\mS}{\mathcal{S}}
\newcommand{\mW}{\mathcal{W}}
\newcommand{\mN}{\mathcal{N}}
\newcommand{\ecred}{red\xspace}
\newcommand{\ecgreen}{green\xspace}
\newcommand{\ecblue}{blue\xspace}
\newcommand{\ecblack}{black\xspace}
\newcommand{\ecpurple}{purple\xspace}
\newcommand{\lablen}{\ensuremath{m}\xspace}
\newcommand{\dbdim}{\ensuremath{d}\xspace}
\newcommand{\extend}{\operatorname{extend}}
\begin{document}
    \title{Achieving DNA Labeling Capacity with Minimum Labels through Extremal de Bruijn Subgraphs}

\ifarxiv 
  \markboth{Submitted to the IEEE Trans. Inform. Theory., May 2025}
\fi

\author{
    \IEEEauthorblockN{
    Christoph~Hofmeister, Anina~Gruica, Dganit~Hanania, Rawad~Bitar and Eitan~Yaakobi}
    \thanks{Part of this work was presented at the 2024 IEEE International Symposium on Information Theory (ISIT)~\cite{isitversion}.}
    \thanks{CH and RB are with the School of Computation, Information and Technology at the Technical University of Munich, Germany. 
    Emails: \{christoph.hofmeister, rawad.bitar\}@tum.de}
    \thanks{AG is with the Department of Applied Mathematics and Computer
Science, Technical University of Denmark, Denmark. Email: anigr@dtu.dk}
    \thanks{DH and EY are with the CS department of Technion---Israel Institute of Technology, Israel. Emails: \{dganit, yaakobi\}@cs.technion.ac.il} 
    \thanks{This project received funding by the German Research Foundation (DFG) under Grant Agreement Nos. BI 2492/5-1, BI 2492/1-1, and WA 3907/7-1. AG received funding from the Dutch Research Council through grant OCENW.KLEIN.539 and from the Villum Fonden through grant VIL”52303”.
    The research was also funded by the European Union (ERC, DNAStorage, 101045114 and EIC, DiDAX 101115134). Views and opinions expressed are however those of the authors only and do not necessarily reflect those of the European Union or the European Research Council Executive Agency. Neither the European Union nor the granting authority can be held responsible for them.
    }
}

\maketitle

\begin{abstract}
    \emph{DNA labeling} is a tool in molecular biology and biotechnology to visualize, detect, and study DNA at the molecular level. In this process, a DNA molecule is \emph{labeled} by a set of specific patterns, referred to as \emph{labels}, and is then imaged. The resulting image is modeled as an $(\ell+1)$-ary sequence, where $\ell$ is the number of labels, in which any non-zero symbol indicates the appearance of the corresponding label in the DNA molecule. The \emph{labeling capacity} refers to the maximum information rate that can be achieved by the labeling process for any given set of labels. The main goal of this paper is to study the minimum number of labels of the same length required to achieve the maximum labeling capacity of 2 for DNA sequences or $\log_2q$ for an arbitrary alphabet of size $q$. The solution to this problem requires the study of path unique subgraphs of the de Bruijn graph with the largest number of edges. We provide upper and lower bounds on this value. We draw new connections to existing literature that let us prove an asymptotic result as the label length tends to infinity.
            
\end{abstract}

\section{Introduction}

Labeling DNA molecules with fluorescent markers provides a powerful means to investigate their structure and function at the molecular level. This technique supports a wide range of applications, particularly in genomics, microbiology, and molecular medicine~\cite{moter2000fluorescence, chen2018efficient, gruszka2021single}.
Labeling methods vary and include fluorescence in situ hybridization (FISH)\cite{moter2000fluorescence}, CRISPR-based systems\cite{chen2018efficient,ma2015multicolor}, and approaches involving methyltransferases\cite{deen2017methyltransferase}.
One notable application of the labeling method is the optical mapping process~\cite{levy2013beyond, muller2017optical}. In~\cite{Nogin2023}, a mathematical model was developed to optimize labels for specific applications within this process.

Understanding the information-theoretic limits of the DNA labeling process was first examined in~\cite{hanania2023capacity}. Consider a DNA sequence $\boldsymbol{x} \in \{A, C, G, T\}^n$ and let $\boldsymbol{\alpha}\in\{A,C,G,T\}^\lablen$ be a \emph{label} of length $\lablen$. The labeling process identifies all positions in $\boldsymbol{x}$ where the label $\boldsymbol{\alpha}$ appears. The result is a binary sequence $\boldsymbol{y}$ of length $n$, where $y_i = 1$ if and only if $(x_i,\ldots,x_{i+\lablen-1})= \boldsymbol{\alpha}$. 

\colorlet{examplecolor1}{tableaublue}
\colorlet{examplecolor2}{tableaured}
In general, this process can be extended for any $\ell>0$ labels. In case $\ell$ different labels are used, the output becomes a sequence over $\{0, 1, \dots, \ell\}$, where each non-zero value indicates a match with a particular label. For example, applying $\boldsymbol{\alpha}_1 = \textcolor{examplecolor2}{AT}$ and $\boldsymbol{\alpha}_2 = \textcolor{examplecolor1}{TG}$ to $\boldsymbol{x} = \textcolor{examplecolor2}{AT}C\textcolor{examplecolor1}{TG}A\textcolor{examplecolor2}{A}\textcolor{examplecolor1}{TG}A$ produces the ternary sequence $\boldsymbol{y} = (\textcolor{examplecolor2}{1}, 0, 0, \textcolor{examplecolor1}{2}, 0, 0, \textcolor{examplecolor2}{1}, \textcolor{examplecolor1}{2}, 0, 0)$.
The \emph{labeling capacity}, introduced in~\cite{hanania2023capacity}, is the logarithmic ratio between the number of possible outputs by labeling and the length of the sequence. This value expresses the maximum asymptotic information rate that can be achieved through DNA labeling. The labeling capacity was fully determined for nearly all types of single labels, as well as for several cases involving multiple labels.

Another problem presented in~\cite{hanania2023capacity} asks for the minimum number of labels of the same length needed to distinguish `almost' all DNA sequences by their labeling output sequence. Here, by `almost,' we refer to achieving the maximal labeling capacity of $\log_2(4)=2$ for DNA sequences. In~\cite{hanania2023capacity}, the cases where the length of the labels is~$1$ or $2$ were fully solved. 
For example, when the length of the labels is~1, the minimum number of labels to reach a labeling capacity of 2 is 3. However, for labels of size greater than~1, the problem becomes more challenging. Further, a simple extension of the results of \cite{hanania2023capacity, hanania2024capacityjournal} shows that the minimum number of required labels of length $\lablen>1$ necessary to achieve the maximum labeling capacity over an alphabet of size $q$, which is $\log_2(q)$, equals the minimum number of edges that need to be removed from the $(\lablen-1)$-dimensional de Bruijn graph over $q$ symbols to make it \emph{path unique}. A graph is called path unique, if for any $k \ge 1$, between any two vertices, there exists at most one walk of length $k$. The maximum number of edges in a path unique 1-dimensional de Bruijn graph over $q$ symbols is solved in~\cite{zhan2012Extremal}, and from this result, it was observed in~\cite{hanania2023capacity}, that the minimum number of length-$2$ labels to reach maximum capacity $\log_2(4)=2$ is 10. 

Although~\cite{hanania2023capacity,hanania2024capacityjournal} provide an exact characterization of the problem of reaching the largest capacity with the minimum number of labels, the problem of finding the largest subgraphs of the de Bruijn graph that are path unique remained open. The main goal of this paper is to study this question. 

The rest of this paper is organized as follows. In Section~\ref{sec:prelims}, we introduce the notation and formally define our problem. We denote by $s(q,\lablen)$ the minimum number of length-$\lablen$ labels over $\Sigma_q$ that are needed in order to achieve the maximum labeling capacity. To find this value, we study $\gamma(q,\dbdim)$, the maximum number of edges in a path unique subgraph of a \mbox{$\dbdim$-dimensional} de Bruijn graph over~$q$ symbols, which is related to $s(q,\lablen)$. In Section~\ref{sec:constructions}, we introduce constructions of path unique de Bruijn subgraphs, which provide two lower bounds on $\gamma(q,\dbdim)$. The first construction is for any $\dbdim$, and the second is restricted to $\dbdim=2$. In Section~\ref{sec:upperbound}, we prove an upper bound on the value of $\gamma(q,\dbdim)$.
We connect the problem of finding $\gamma(q, \dbdim)$ to problems in the literature and derive an asymptotic lower bound on $\gamma(q, \dbdim)$ for large $\dbdim$ in Section~\ref{sec:connections}.
We compare the different bounds in Section~\ref{sec:comparison}.
In Section~\ref{sec:conclusion}, we discuss open problems and conclude the paper. %

\section{Definitions and Preliminaries}\label{sec:prelims}

Throughout this paper, $q$, $\lablen$, $\dbdim$ and $n$ are positive integers, and ~$\Sigma_q$ denotes the $q$-ary alphabet $\{0,1,\dots,q-1\}$. 
For a positive integer $n$ we denote by $[n]$ the set $\{1,\dots,n\}$. For $\bm{x}=(x_1,\dots,x_n) \in \Sigma_q^n$ and $i,k \in [n]$ with $1 \le i \le n-k+1$ we let $\bm{x}_{[i;k]} = (x_i,\dots,x_{i+k-1})$ and a \emph{label} $\bm{\alpha}\in \Sigma_q^\lablen$, $\lablen<n$ is a (relatively short) sequence over $\Sigma_q$. For any matrix $\bm{A}$ of size $n \times \lablen$, we denote by $|\bm{A}|$ the sum of the entries of $\bm{A}$, i.e., $|\bm{A}| := \sum_{i\in [n], j \in [\lablen]} A_{i,j}$. Next, we formally define the labeling process that is studied in this paper.

\begin{definition}[The labeling model~\cite{hanania2023capacity}]
Let $\bm{\alpha}_1,\dots,\bm{\alpha}_{\ell} \in \Sigma_{q}^\lablen$ be labels of length~$\lablen$. Denote by $\mA$ the set $\{\bm{\alpha}_1,\dots,\bm{\alpha}_{\ell}\}$, where $\bm{\alpha}_1 \le \dots \le \bm{\alpha}_{\ell}$ are ordered lexicographically. 
\begin{enumerate}[label=(\roman*), itemsep=0em]
    \item The \emph{$\mA$-labeling sequence} of $\bm{x}=(x_1,\dots,x_n) \in \Sigma_q^n$ is the sequence $L_{\mA}(\bm{x})=(c_1,\dots,c_n)\in \Sigma_{\ell+1}^n$, in which $c_i=j$ if $\bm{x}_{[i;\lablen]} = \bm{\alpha}_j$ and $i \le n-\lablen+1$. If such a~$j$ does not exist or $i \in \{n-\lablen,\dots,n\}$ then we set $c_i=0$.
    \item The \emph{labeling capacity} of $\mA$ is 
    \begin{align*}
        \mathsf{cap}(\mA) = \limsup_{n \to \infty} \frac{\log_2(|\{L_{\mA}(\bm{x}) : \bm{x} \in \Sigma_{q}^n\}|)}{n}.
    \end{align*}
\end{enumerate}
\end{definition}

\colorlet{examplecolor1}{tableaublue}
\colorlet{examplecolor2}{tableaured}
\begin{example}
Let $q=4$, $\lablen=2$ and $\mA=\{\bm{\alpha}_1,\bm{\alpha}_2\}$, where $\bm{\alpha}_1=\textcolor{examplecolor1}{(1,0)}$, $\bm{\alpha}_2=\textcolor{examplecolor2}{(2,2)}$. For $\bm{x}=(3,\textcolor{examplecolor1}{1,0},3,\textcolor{examplecolor2}{2,2,2},3,\textcolor{examplecolor1}{1,0})$ we have $L_\mA(\bm{x}) = (0,\textcolor{examplecolor1}{1},0,0,\textcolor{examplecolor2}{2,2},0,0,\textcolor{examplecolor1}{1},0)$.
\end{example}

In~\cite{hanania2023capacity, hanania2024capacityjournal}, the labeling capacity of a single label was computed for almost all label types (e.g., periodic labels), as well as for several cases involving multiple labels (e.g., overlapping labels).
In~\cite{hanania2023capacity, hanania2024capacityjournal}, the labeling capacity of a single label was computed for almost all cases and for several more cases of multiple labels. Another interesting problem presented in~\cite{hanania2023capacity, hanania2024capacityjournal} asks for the minimum number of labels of the same length that are needed in order to reach the largest labeling capacity of $\log_2q$. Formally, this problem, which is the main focus of this paper, is stated as follows. 
\begin{problem} \label{prob:motivation}
    Find the minimum number $\ell$ of length-$\lablen$ labels over $\Sigma_q$ such that there exists a set of labels $\mA=\{\bm{\alpha}_1,\dots,\bm{\alpha}_{\ell}\}$ achieving the largest capacity. More formally, find the value
    \begin{align*}
        s(q,\lablen) := \min\{\ell : \exists \mA \subseteq \Sigma_q^\lablen, |\mA| = \ell, \mathsf{cap}(\mA) = \log_2(q)\}.
    \end{align*}
\end{problem}

It is easy to see that for all $q$, $s(q,1) = q-1$. Moreover, it was proved in~\cite{hanania2023capacity} that $s(4,2) = 10$. This result was derived using a connection between Problem~\ref{prob:motivation} and a problem from graph theory, which is also the approach we take in this paper.  

To explain this connection and reformulate Problem~\ref{prob:motivation} in a graph-theoretic manner, we first define the well-studied objects on which this connection is based; see~\cite{de1946combinatorial} for a reference.

\begin{definition}[de Bruijn graphs] \label{def:debr}
The $\dbdim$-dimensional \emph{de Bruijn graph} $\mB_{q,\dbdim}=(\mV_{q,\dbdim},\mE_{q,\dbdim})$ over $\Sigma_q$ is the directed graph with $\mV_{q,\dbdim} = \Sigma_q^\dbdim$ and for $\bm{x}=(x_1,\dots,x_{\dbdim}) \in \Sigma_q^\dbdim$ and $\bm{y}=(y_1,\dots, y_{\dbdim}) \in \Sigma_q^\dbdim$ we have $(\bm{x},\bm{y}) \in \mE_{q,\dbdim}$ if and only if $(x_2,\dots,x_{\dbdim})=(y_1,\dots,y_{\dbdim-1})$, i.e., the last $\dbdim-1$ entries of~$\bm{x}$ coincide with the first $\dbdim-1$ entries of~$\bm{y}$.
\end{definition}

Note that for a positive integer $k$, a \emph{walk} of length $k$ (i.e., a sequence of vertices and connecting edges with a total of $k$ edges) in the de Bruijn graph $\mB_{q,\dbdim}$ can be seen as a sequence $\bm{s}=(s_1,\dots,s_{\dbdim+k})\in \Sigma_q^{\dbdim+k}$. In particular, an edge in $\mB_{q,\dbdim}$ corresponds to a sequence in $\Sigma_q^{\dbdim+1}$. As shown later in Theorem~\ref{thm:dunno}, Problem~\ref{prob:motivation} can be reformulated into a problem concerned with graphs satisfying a specific constraint:

\begin{definition}[Path unique graph]
Let $\mG=(\mV,\mE)$ be a directed graph. It is said that $\mG$ is \emph{path unique} if for any positive integer $k$ there exists at most one walk of length~$k$ between any two vertices.
\end{definition}

An \textit{edge-induced subgraph} is a subset of the edges of a graph together with any vertices that are their start- or endpoints. 
From now on, by a subgraph we mean an edge-induced subgraph. The following theorem shows the connection between path unique subgraphs of $\mB_{q,\dbdim}$ and the value of $s(q,\dbdim+1)$. Since it is a generalization of~\cite[Theorem 12]{hanania2024capacityjournal} and it can be proven analogously, we omit the proof here.

\begin{theorem} \label{thm:dunno}
    Let $\mG=(\mV,\mE)$ be a subgraph of $\mB_{q,\dbdim}$ and let $\mA=\Sigma_q^{\dbdim+1} \setminus \mE$. It holds that $\mathsf{cap}(\mA) = \log_2(q)$ if and only if~$\mG$ is path unique.
\end{theorem}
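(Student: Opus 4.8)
The plan is to unwind what the labeling map $L_\mA$ does when $\mA=\Sigma_q^{\dbdim+1}\setminus\mE$. For $\bm x\in\Sigma_q^n$ and an index $i\le n-\dbdim$, the coordinate $(L_\mA(\bm x))_i$ equals $0$ precisely when the window $\bm x_{[i;\dbdim+1]}$ is an edge of $\mG$ (together with a handful of boundary indices forced to $0$ by the definition), and otherwise it names the unique label equal to $\bm x_{[i;\dbdim+1]}$. Thus $L_\mA$ reveals every length-$(\dbdim+1)$ window that is \emph{not} an edge of $\mG$ and hides which edge is used wherever the output is $0$. The workhorse of both directions is the following \textbf{swapping lemma}: if $\bm u,\bm v\in\Sigma_q^{\dbdim+k}$ are both walks in $\mG$ (every length-$(\dbdim+1)$ window is an edge of $\mG$) that share their first $\dbdim$ symbols and their last $\dbdim$ symbols --- equivalently, two length-$k$ walks in $\mG$ with the same start vertex and the same end vertex --- then $L_\mA(\bm p\bm u\bm q)=L_\mA(\bm p\bm v\bm q)$ for arbitrary (possibly empty) $\bm p,\bm q$. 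This follows from a case split on window location: the $k$ windows lying entirely inside the $\bm u/\bm v$-block are edges of $\mG$ and hence receive $0$ in both sequences, whereas every other window sees only $\bm p$, $\bm q$, and at most $\dbdim$ symbols at one end of the block, on which $\bm u$ and $\bm v$ agree.

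For the ``only if'' direction I argue the contrapositive: if $\mG$ is not path unique, then $\mathsf{cap}(\mA)<\log_2 q$. Path non-uniqueness furnishes two distinct length-$k$ walks of $\mG$ with common endpoints; as words these are $\bm u\ne\bm v\in\Sigma_q^{m}$ with $m=\dbdim+k$, sharing their first and last $\dbdim$ symbols, and we may assume $\bm u<_{\mathrm{lex}}\bm v$. If $\bm x\in\Sigma_q^n$ contains $\bm v$ as a substring, say $\bm x=\bm p\bm v\bm q$, then by the swapping lemma $\bm x'=\bm p\bm u\bm q$ satisfies $L_\mA(\bm x')=L_\mA(\bm x)$ while $\bm x'<_{\mathrm{lex}}\bm x$, so $\bm x$ is never the lexicographically least preimage of its labeling sequence. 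Since the map sending each valid output to its lexicographically least preimage is injective, the number of valid labeling sequences of length $n$ is at most the number of $q$-ary words of length $n$ avoiding $\bm v$ as a substring, and cutting $[n]$ into $\lfloor n/m\rfloor$ disjoint blocks of length $m$ shows this is at most $(q^m-1)^{\lfloor n/m\rfloor}q^{m}$. Taking $\log_2$, dividing by $n$ and letting $n\to\infty$ yields $\mathsf{cap}(\mA)\le \tfrac1m\log_2(q^m-1)<\log_2 q$.

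For the ``if'' direction, suppose $\mG$ is path unique. Since $L_\mA$ is a map out of $\Sigma_q^n$ there are at most $q^n$ valid labeling sequences, so $\mathsf{cap}(\mA)\le\log_2 q$ is automatic; it remains to show the number of valid sequences is $q^{\,n-o(n)}$, for which it suffices to bound every fiber of $L_\mA$ by $q^{O(\dbdim)}$. Fix a valid output $\bm c$ and call a position $j$ \emph{pinned} if $c_i\ne0$ for some $i$ with $i\le j\le i+\dbdim$; a pinned position takes the same value in every preimage of $\bm c$, being read off $\bm\alpha_{c_i}$. The non-pinned positions split into maximal runs; for a run $[a,b]$ not adjacent to the ends of $[n]$ one checks (using that $c_i=0$ certifies ``edge of $\mG$'' on the whole window range $[a-\dbdim,b]$, and that the bordering positions are pinned) that the $\dbdim$ positions just before and just after the run are pinned and that each preimage restricts on $[a-\dbdim,b+\dbdim]$ to a walk of $\mG$ of a fixed length with a fixed start and a fixed end vertex; path uniqueness then forces this walk to be the same in all preimages. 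Only the at most two runs touching the ends of $[n]$ (plus the handful of always-$0$ boundary indices) permit variation, and on each such run the number of walks of $\mG$ of the prescribed length with one endpoint prescribed is at most $q^{\dbdim}$ --- because path uniqueness gives at most one walk between any ordered pair of vertices of that length, and there are $q^{\dbdim}$ choices for the free endpoint; the degenerate case $\bm c=\bm 0$ is handled likewise with both endpoints free. Hence $|L_\mA^{-1}(\bm c)|\le q^{O(\dbdim)}$, so there are at least $q^{\,n-O(\dbdim)}$ valid labeling sequences and $\mathsf{cap}(\mA)=\log_2 q$.

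The step I expect to be the main obstacle is the fiber bound in the ``if'' direction: one has to treat the boundary of $[n]$ with care --- in particular the indices near $n$ that are forced to $0$ purely by the length cutoff in the definition, where $c_i=0$ does \emph{not} certify that the window is an edge of $\mG$ --- and establish the pinned/non-pinned run structure cleanly enough that path uniqueness can be invoked run by run. The remaining ingredients (the window-location bookkeeping in the swapping lemma, and the elementary substring-avoidance count) are routine.
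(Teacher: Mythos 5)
The paper does not actually prove Theorem~\ref{thm:dunno}: it is stated as a generalization of \cite[Theorem~12]{hanania2024capacityjournal} and the proof is explicitly omitted, so there is no in-paper argument to compare yours against. Judged on its own, your proof is correct and self-contained. The swapping lemma is sound (the three-way split by window location is exhaustive, since a window of length $\dbdim+1$ starting in $\bm{p}$ can only reach the first $\dbdim$ symbols of the inserted block, and one starting after the block's $k$ internal windows only sees its last $\dbdim$ symbols), and combining it with lexicographically minimal representatives and the block-partition count of $\bm{v}$-avoiding words gives a clean strict bound $\mathsf{cap}(\mA)\le \frac{1}{\dbdim+k}\log_2(q^{\dbdim+k}-1)<\log_2 q$ when path uniqueness fails. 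For the converse, your pinned/unpinned run analysis is right: maximality of a run $[a,b]$ forces the pinning window at $a-1$ to be exactly $[a-1-\dbdim,a-1]$ (and symmetrically at $b+1$), so an interior run sits between two fully determined vertices and, since all intervening windows are certified edges of $\mG$, corresponds to a walk of fixed length between fixed endpoints, hence is unique; only the two boundary runs contribute, each at most $q^{\dbdim}$ choices, giving $|L_\mA^{-1}(\bm{c})|\le q^{O(\dbdim)}$ and at least $q^{n-O(\dbdim)}$ valid outputs. You correctly identify the one genuinely delicate point, namely the last $O(\dbdim)$ coordinates where $c_i=0$ is forced by the length cutoff (the paper's definition is even internally inconsistent for $i\in\{n-\lablen,n-\lablen+1\}$) and carries no edge information; absorbing those positions into a $q^{O(\dbdim)}$ factor is exactly the right fix. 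You also implicitly read the capacity denominator as $n$ rather than the paper's (evidently typographical) $q^n$, which is the only reading under which the theorem is meaningful. I see no gap.
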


Intuitively, the walks on $\mG$ correspond one-to-one to the sequences $\bm{x}$ for which $L_\mathcal{A}(\bm{x})$ contains only zeros. If every such sequence can always be uniquely identified by the last label that occurred before it, the first label occurring after it, and its length, then almost all sequences can be distinguished based on their label sequence.

From Theorem~\ref{thm:dunno} it follows that finding the size of the largest path unique subgraph of $\mB_{q,\dbdim}$ is equivalent to finding the value of $s(q,\dbdim+1)$. This motivates the following reformulation of Problem~\ref{prob:motivation}. 

\begin{problem} \label{prob:gamma} 
For any $q$ and $\dbdim$, let $\Gamma(q,\dbdim)$ be the set of subgraphs of $\mB_{q,\dbdim}$ that are path unique. Find the value
\begin{align*}
    \gamma(q,\dbdim) := \max\{|\mE| : \mG=(\mV,\mE) \in \Gamma(q,\dbdim)\}.
\end{align*}
We call graphs in $\Gamma(q,\dbdim)$ that have $\gamma(q,\dbdim)$ edges \emph{optimal} subgraphs of $\mB_{q,\dbdim}$. 
\end{problem}

An immediate consequence of Theorem~\ref{thm:dunno} is the following.
\begin{corollary} \label{cor:relation}
We have $s(q,{\dbdim}+1)=q^{\dbdim+1}-\gamma(q,\dbdim)$.
\end{corollary}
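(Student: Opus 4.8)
The plan is to read the identity off directly from Theorem~\ref{thm:dunno} by making explicit the complementation correspondence between label sets and edge-induced subgraphs of the de Bruijn graph. First I would recall, as observed right after Definition~\ref{def:debr}, that an edge of $\mB_{q,\dbdim}$ is precisely a sequence in $\Sigma_q^{\dbdim+1}$; hence the edge set $\mE_{q,\dbdim}$ is in canonical bijection with $\Sigma_q^{\dbdim+1}$, and in particular $|\mE_{q,\dbdim}| = q^{\dbdim+1}$. Consequently the map $\mA \mapsto \mE := \Sigma_q^{\dbdim+1}\setminus \mA$ is a bijection between subsets $\mA \subseteq \Sigma_q^{\dbdim+1}$ (the candidate length-$(\dbdim+1)$ label sets) and subsets $\mE \subseteq \mE_{q,\dbdim}$, equivalently edge-induced subgraphs $\mG=(\mV,\mE)$ of $\mB_{q,\dbdim}$; under this map $|\mA| = q^{\dbdim+1} - |\mE|$.

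Next I would invoke Theorem~\ref{thm:dunno}: for $\mA$ and $\mG$ paired as above, $\mathsf{cap}(\mA) = \log_2 q$ if and only if $\mG \in \Gamma(q,\dbdim)$. Combining this equivalence with the size relation above, the label sets that achieve the maximum capacity correspond exactly to the path unique subgraphs, and minimizing $|\mA|$ over the former is the same as minimizing $q^{\dbdim+1} - |\mE|$ over the latter. Therefore
\begin{align*}
  s(q,\dbdim+1)
  &= \min\{|\mA| : \mA \subseteq \Sigma_q^{\dbdim+1},\ \mathsf{cap}(\mA) = \log_2 q\}\\
  &= \min\{q^{\dbdim+1} - |\mE| : \mG=(\mV,\mE) \in \Gamma(q,\dbdim)\}\\
  &= q^{\dbdim+1} - \max\{|\mE| : \mG=(\mV,\mE) \in \Gamma(q,\dbdim)\}\\
  &= q^{\dbdim+1} - \gamma(q,\dbdim).
\end{align*}

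Since $\Gamma(q,\dbdim)$ is a nonempty finite set (the empty subgraph is trivially path unique, so the extrema are attained, and in particular $s(q,\dbdim+1)$ is well-defined), no further justification of well-definedness is required. There is essentially no hard step: the only points that need care are verifying that the label-set/edge-set correspondence is a genuine bijection onto \emph{all} of $\Sigma_q^{\dbdim+1}$, so that the count $q^{\dbdim+1}$ is exact and every label set — not only a path unique one — is covered, and that passing from an edge set to its associated edge-induced subgraph leaves the edge count unchanged; both are immediate from the definitions.
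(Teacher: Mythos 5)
Your proposal is correct and takes essentially the same route as the paper: the paper states Corollary~\ref{cor:relation} as an immediate consequence of Theorem~\ref{thm:dunno} without further argument, and your write-up merely makes explicit the complementation bijection $\mA \leftrightarrow \Sigma_q^{\dbdim+1}\setminus\mA$ between label sets and edge sets, the size relation $|\mA| = q^{\dbdim+1}-|\mE|$, and the resulting exchange of the minimum over capacity-achieving label sets for the maximum over path unique subgraphs. The extra remarks on attainment of the extrema and well-definedness are fine and consistent with the paper's framework.
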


Due to Corollary~\ref{cor:relation}, it is enough to study the value of~$\gamma(q,\dbdim)$ when attempting to solve Problem~\ref{prob:motivation} (and Problem~\ref{prob:gamma}) for labels of length $\lablen=\dbdim+1$, which will be the focus of the rest of the paper.

The problem of finding the value of $\gamma(q,1)$ was fully solved in~\cite{zhan2012Extremal}, and it concerns the complete graph of $q$ vertices. For example, it holds that $\gamma(4,1) = 6$ and a corresponding graph using the DNA alphabet is visualized in Fig.~\ref{fig:s41}.
\begin{theorem}[\textnormal{see~\cite[Theorem 1]{zhan2012Extremal}}]
The following holds:
\begin{align*}
    \gamma(q,1) = \begin{cases}
        \frac{(q+1)^2}{4} \quad &\textnormal{if $q$ is odd,} \\
        \frac{q(q+2)}{4} \quad &\textnormal{if $q$ is even.}
    \end{cases}
\end{align*}
\label{thm:extremal}
\end{theorem}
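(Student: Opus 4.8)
The plan is to prove matching lower and upper bounds on $\gamma(q,1)$. It is worth noting first that $\mB_{q,1}$ is the complete directed graph on the vertex set $\Sigma_q$ together with all $q$ self-loops, so a subgraph of it is simply an arbitrary matrix $A\in\{0,1\}^{q\times q}$, and path uniqueness is equivalent to requiring that every power $A^k$, $k\ge 1$, again has all its entries in $\{0,1\}$. Also, $\lfloor(q+1)^2/4\rfloor$ equals $(q+1)^2/4$ when $q$ is odd and $q(q+2)/4$ when $q$ is even, so the claim is equivalent to $\gamma(q,1)=\lfloor(q+1)^2/4\rfloor$.

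For the lower bound I would exhibit one explicit optimal subgraph. Fix a \emph{hub} vertex $h$, split the remaining $q-1$ vertices into a source part $\mS$ of size $\lfloor(q-1)/2\rfloor$ and a sink part $\mathcal T$ of size $\lceil(q-1)/2\rceil$, and insert all edges from $\mS\cup\{h\}$ to $\mathcal T\cup\{h\}$ (in particular the self-loop at $h$). Then $h$ is the unique vertex with both an incoming and an outgoing edge, so every walk of length at least $2$ has the shape $u\to h\to h\to\cdots\to h\to w$ and is thus determined by its two endpoints and its length; the subgraph is therefore path unique. It has $(|\mS|+1)(|\mathcal T|+1)$ edges, and since these two factors are positive integers summing to $q+1$ and differing by at most $1$, their product is $\lfloor(q+1)^2/4\rfloor$, giving $\gamma(q,1)\ge\lfloor(q+1)^2/4\rfloor$.

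For the upper bound the crucial structural step is to show that, in any path unique graph, every strongly connected component is a single directed cycle, a lone self-looped vertex counting as a cycle of length $1$. Indeed, if a component $C$ had a vertex $v$ with two distinct out-neighbors $a,b$ inside $C$, then strong connectivity would give closed walks $W_a$ from $v$ through $a$ back to $v$ and $W_b$ from $v$ through $b$ back to $v$, of some lengths $\ell_a,\ell_b\ge 1$; traversing $W_a$ exactly $\ell_b$ times and $W_b$ exactly $\ell_a$ times then yields two closed walks at $v$ of the common length $\ell_a\ell_b$ that already differ in their first edge, contradicting path uniqueness. Hence the vertices partition into sources (only outgoing edges), sinks (only incoming edges) and internal vertices $\mathcal M$ (both), every edge runs from $(\text{sources})\cup\mathcal M$ into $\mathcal M\cup(\text{sinks})$, and the only closed subwalks go around the component-cycles. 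When $|\mathcal M|=1$ this already forces $|\mE|\le(s+1)(\sigma+1)\le\lfloor(q+1)^2/4\rfloor$, where $s$ and $\sigma$ are the numbers of sources and sinks (so $s+\sigma=q-1$ and the two factors sum to $q+1$). For larger $\mathcal M$ one controls the edges incident to $\mathcal M$ via path uniqueness of length-$2$ walks: two internal vertices sharing an in-neighbor must have disjoint out-neighborhoods, and symmetrically; this ``consolidation'' of neighborhoods is what keeps the count at $\lfloor(q+1)^2/4\rfloor$.

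The step I expect to be the main obstacle is precisely this last one. When $|\mathcal M|\ge 2$—in particular when a cycle of length at least $2$ is present—the naive length-$2$ count is no longer tight, and one must either invoke path uniqueness of longer walks or run an induction on $q$ by deleting a vertex of small total degree, together with a separate treatment of the extremal configuration in which every vertex has total degree roughly $q/2$. This finer analysis is exactly what~\cite[Theorem~1]{zhan2012Extremal} carries out, and we invoke it.
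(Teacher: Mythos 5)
The paper does not actually prove this statement; it imports it wholesale as \cite[Theorem~1]{zhan2012Extremal}, so there is no internal proof to compare against. Measured against that, your proposal does strictly more than the paper. Your lower bound is complete and correct: the hub construction with edge set $(\mS\cup\{h\})\times(\mathcal{T}\cup\{h\})$ is path unique for exactly the reason you give (every internal vertex of a walk of length $\ge 2$ must have positive in- and out-degree, hence equals $h$), and the factor count $\lfloor(q+1)^2/4\rfloor$ is right; note that the paper's Fig.~\ref{fig:s41} realizes the same bound for $q=4$ with a slightly different shape (two self-looped ``hubs'' with no edges between them), so optimal graphs are not unique. Your structural lemma that every nontrivial strongly connected component of a path unique digraph is a single directed cycle is also correct --- the two closed walks of common length $\ell_a\ell_b$ differing in their first edge give a clean contradiction --- and this is indeed the backbone of the argument in the cited work.

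The genuine gap is in the upper bound for $|\mathcal{M}|\ge 2$. You correctly observe that the naive count $(s+1)(\sigma+1)$ only closes the case of a single internal vertex, and that when several internal vertices (equivalently, cycles of length $\ge 2$, or several disjoint self-loops) are present one needs a finer consolidation argument. But you then resolve this case by ``invoking'' \cite[Theorem~1]{zhan2012Extremal} --- which is precisely the statement being proved, so as a standalone proof the proposal is circular at exactly the hard step. The missing ingredient is a quantitative version of your neighborhood-disjointness observation: path uniqueness of length-$2$ walks forces the out-neighborhoods of the internal vertices reachable from a common in-neighbor to be pairwise disjoint (and dually), and one must turn this into the bound $\sum_v d^-(v)d^+(v)\le\gamma(q,1)$-type inequality or an induction on $q$ that deletes a low-degree vertex. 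Since the paper itself treats the theorem as an external citation, deferring to the reference is defensible in context, but you should state clearly that the upper bound is being quoted, not proved.
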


\begin{figure}[t]
  \centering
  \resizebox{0.35\textwidth}{!}{
    \begin{tikzpicture}[->,>=stealth',shorten >=1pt,thick]
      \SetGraphUnit{2} 
      \tikzset{VertexStyle/.style = {draw,circle,thick,minimum size=10mm, fill=black!10!white},thick} 
      \Vertices{line}{A,C,G,T}

      \Loop[dist=1cm,dir=NO](A)
      \Loop[dist=1cm,dir=NO](G)
      \draw[->] (A) to [right] node [above] {} (C);
      \draw[->] (A) to [bend right] node [above] {} (T);
      \draw[->] (G) to [right] node [above] {} (C);
      \draw[->] (G) to [right] node [above] {} (T);
  \end{tikzpicture}}
  \caption{\small An optimal path unique graph over $\Sigma_4=\{A,C,G,T\}$.}
  \label{fig:s41}
\end{figure} 

The approach of~\cite{zhan2012Extremal} relies on the following lemma.

\begin{lemma} \label{lem:adj}

  Let $\mG$ be a graph on $n$ vertices, and let $\bm{A}$ be its adjacency matrix. For any $i, j \in [n]$, and for any positive integer $k$ the entry in position $(i,j)$ of $\bm{A}^k$ equals the number of walks in $\mG$ starting at vertex $i$ and ending at vertex $j$. Further,~$\mG$ is a path unique graph if and only if for any positive integer~$k$, the entries of $\bm{A}^k$ are in $\{0,1\}$.
\end{lemma}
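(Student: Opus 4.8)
The plan is to prove the two assertions in turn: the counting identity by induction on $k$, and then the path‑uniqueness equivalence as an immediate reformulation of it.

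For the counting statement, I would induct on $k$. The base case $k=1$ is just the definition of the adjacency matrix: $(\bm{A})_{i,j}$ equals $1$ if $(i,j)\in\mE$ and $0$ otherwise, which is exactly the number of walks of length $1$ from $i$ to $j$, since the graphs in question are simple and have no parallel edges. For the inductive step, write $\bm{A}^{k+1}=\bm{A}^k\bm{A}$, so that $(\bm{A}^{k+1})_{i,j}=\sum_{\ell\in[n]}(\bm{A}^k)_{i,\ell}(\bm{A})_{\ell,j}$. Every walk of length $k+1$ from $i$ to $j$ decomposes uniquely as a walk of length $k$ from $i$ to some intermediate vertex $\ell$ followed by the edge $(\ell,j)$, and conversely every such pair yields a distinct walk of length $k+1$. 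Hence the number of length-$(k+1)$ walks from $i$ to $j$ equals $\sum_{\ell}(\#\text{ length-}k\text{ walks }i\to\ell)\cdot(\#\text{ edges }\ell\to j)$, which by the induction hypothesis together with the base case is precisely the sum above. This closes the induction.

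For the second statement, I would observe that by what was just shown, $(\bm{A}^k)_{i,j}$ is always a nonnegative integer that counts the walks of length $k$ from $i$ to $j$. By definition, $\mG$ is path unique exactly when for every positive integer $k$ and every pair $(i,j)$ there is at most one such walk, i.e. $(\bm{A}^k)_{i,j}\le 1$; and since this entry is a nonnegative integer, the condition $(\bm{A}^k)_{i,j}\le 1$ is the same as $(\bm{A}^k)_{i,j}\in\{0,1\}$. This yields the equivalence in both directions.

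There is no genuine obstacle here; the argument is the standard ``powers of the adjacency matrix count walks'' fact. The only point that warrants a word of care is the base case: one should note explicitly that the relevant graphs (subgraphs of de Bruijn graphs) are simple, so $\bm{A}$ is a $0/1$ matrix whose entries literally count length-$1$ walks, which is what makes the induction start correctly. Everything else is routine bookkeeping.
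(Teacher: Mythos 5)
Your proof is correct and is exactly the standard walk-counting induction that the paper itself relies on: the paper states Lemma~\ref{lem:adj} without proof, treating it as the classical fact underlying the approach of the cited reference, so there is no divergence in method to report. Your added remark that the (directed) de Bruijn subgraphs have no parallel edges, so $\bm{A}$ is a $0/1$ matrix and the base case of the induction holds, is the only point needing care and you handle it correctly.
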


We denote the adjacency matrix of a de Bruijn graph $\mB_{q,\dbdim}$ by $\bm{B}_{q,\dbdim}$ (with vertices in lexicographic order).

\ifreview
  \section{Constructions of Path Unique Subgraphs of $\mB_{q,\dbdim}$}\label{sec:constructions}
\else
  \section{Constructions of Path Unique\\ Subgraphs of $\mB_{q,\dbdim}$}\label{sec:constructions}
\fi

In this section, we give two constructions of path unique subgraphs of $\mB_{q,\dbdim}$, which provide lower bounds on $\gamma(q,\dbdim)$. The first construction is applicable for any $q$ and $\dbdim$, while the second works for any $q$ and $\dbdim=2$ but gives a tighter lower bound.

\begin{construction} \label{constr:1}
Let $\mG^1_{q,\dbdim}=(\mV_{q,\dbdim}^1,\mE_{q,\dbdim}^1)$ be the subgraph of $\mB_{q,\dbdim}=(\mV_{q,\dbdim}, \mE_{q, \dbdim})$ with $\mV_{q,\dbdim}^1 = \mV_{q,\dbdim}$ and where the edge $(x_1,\dots,x_{\dbdim+1}) \in \mE_{q, \dbdim}$ is in $\mE_{q,\dbdim}^1$ if and only if
\begin{enumerate}[label=\Roman*), itemsep=0em]
    \item \label{const1-cond1}  $x_1 \le x_2$,
    \item \label{const1-cond2} it does \textbf{not} hold that $x_1 \le x_2 \le \dots \le x_{\dbdim+1} \le q-2$.
\end{enumerate}
\end{construction}

\begin{lemma} \label{lem:constr1count}
The total number of edges in $\mG^1_{q,\dbdim}$ is
\begin{align*}
    \left(\frac{q+1}{2q}\right)q^{\dbdim+1}- \binom{\dbdim+q-1}{\dbdim+1}.
\end{align*}
\end{lemma}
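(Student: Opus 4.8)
The plan is to count edges in $\mG^1_{q,\dbdim}$ directly by characterizing which sequences $(x_1,\dots,x_{\dbdim+1}) \in \Sigma_q^{\dbdim+1}$ satisfy both conditions of Construction~\ref{constr:1}. Recall an edge of $\mB_{q,\dbdim}$ corresponds to a sequence in $\Sigma_q^{\dbdim+1}$, so we must count sequences with $x_1 \le x_2$ that are \emph{not} non-decreasing with last entry $\le q-2$.

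**The inclusion-exclusion step.**

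First I would count the sequences satisfying condition~\ref{const1-cond1} alone, i.e., $x_1 \le x_2$ with $x_3,\dots,x_{\dbdim+1}$ arbitrary. The number of pairs $(x_1,x_2) \in \Sigma_q^2$ with $x_1 \le x_2$ is $\binom{q+1}{2} = \frac{q(q+1)}{2}$, and the remaining $\dbdim-1$ coordinates are free, giving $\frac{q(q+1)}{2} \cdot q^{\dbdim-1} = \frac{q+1}{2q} \cdot q^{\dbdim+1}$. Then I would subtract the sequences satisfying~\ref{const1-cond1} but violating~\ref{const1-cond2}, namely those with $x_1 \le x_2 \le \dots \le x_{\dbdim+1} \le q-2$. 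Since $x_1 \le x_2$ is automatically implied by the full non-decreasing chain, these are exactly the non-decreasing sequences of length $\dbdim+1$ over the alphabet $\{0,1,\dots,q-2\}$ of size $q-1$. The number of such sequences (multisets of size $\dbdim+1$ from $q-1$ symbols) is $\binom{(q-1) + (\dbdim+1) - 1}{\dbdim+1} = \binom{q+\dbdim-1}{\dbdim+1}$. Subtracting gives the claimed formula.

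**Expected obstacle.**

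There is no real obstacle here — the argument is a routine two-term inclusion-exclusion once one observes that the ``bad'' event (violating~\ref{const1-cond2} while satisfying~\ref{const1-cond1}) is precisely the set of non-decreasing $(\dbdim+1)$-sequences over a $(q-1)$-letter alphabet. The only point requiring a line of justification is that condition~\ref{const1-cond1} is implied by $x_1 \le x_2 \le \dots \le x_{\dbdim+1}$, so the subtracted set is genuinely a subset of the first set; and the standard stars-and-bars identity $\binom{r+k-1}{k}$ for the number of size-$k$ multisets from $r$ symbols. I would state both facts explicitly and then assemble the count.
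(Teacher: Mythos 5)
Your argument is correct and matches the paper's proof: both count the $x_1\le x_2$ edges as the fraction $(q+1)/(2q)$ of $q^{\dbdim+1}$, observe the non-decreasing sequences bounded by $q-2$ form a subset of these, and subtract their stars-and-bars count $\binom{\dbdim+q-1}{\dbdim+1}$. Your write-up just makes the subset observation and the multiset identity slightly more explicit than the paper does.
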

\begin{proof}
The proportion of edges $(x_1,\dots,x_{\dbdim+1}) \in \Sigma_q^{\dbdim+1}= \mE_{q, \dbdim}$ satisfying $x_1 \le x_2$ within the set of all edges is $(q+1)/(2q)$. The edges $(x_1,\dots,x_{\dbdim+1}) \in \Sigma_q^{\dbdim+1}$ for which $x_1 \le x_2 \le \dots \le x_{\dbdim+1} \le q-2$ are clearly a subset of the edges described in~\ref{const1-cond1} of Construction~\ref{constr:1}. There are a total of $\binom{\dbdim+q-1}{\dbdim+1}$ such edges (see e.g.~\cite[Chapter II.5]{feller1991introduction}) from which the lemma follows.
\end{proof}

\begin{example}
When constructing the graph $\mG^1_{3,3}$ we start with the graph shown in Fig.~\ref{fig:partgraph} consisting of 54 edges (where $(i,\bm{x}) \in \Sigma_q^{\dbdim}$ is the set of vertices whose first entry is equal to $i \in \Sigma_q$).
We then remove the edges in the set $$\{(0,0,0,0), (0,0,0,1),(0,0,1,1),(0,1,1,1),(1,1,1,1)\}$$ resulting in a subgraph of $\mB_{3,3}$ with $49$ edges, which is $\mG^1_{3,3}$.
\begin{figure}[t]
  \centering\resizebox{0.35\textwidth}{!}{
  \begin{tikzpicture}[->,>=stealth',shorten >=1pt,thick,round/.style={draw,circle,thick,minimum size=5mm,thick},]
\SetGraphUnit{2} 
\node[round, fill=black!10!white] at (0,0) (A) [] {$(0,\bm{x})$};
\node[round, fill=black!10!white] (B) at (3,0) [] {$(1,\bm{x})$};
\node[round, fill=black!10!white] (C) at (6,0) [] {$(2,\bm{x})$};

\Loop[dist=1cm,dir=SO](A)
\Loop[dist=1cm,dir=SO](B)
\Loop[dist=1cm,dir=SO](C)
\draw[->] (A) to [left] node [above] {} (B);
\draw[->] (A) to [bend left] node [above] {} (C);
\draw[->] (B) to [left] node [above] {} (C);
\end{tikzpicture}}
\caption{\small The edges of part~\ref{const1-cond1} of Construction~\ref{constr:1} for $q=3$.}
\label{fig:partgraph}
\end{figure} 
\end{example}

\begin{theorem} \label{prop:genlb}
The graph $\mG^1_{q,\dbdim}$ is path unique. In particular,
\begin{align*}
    \gamma(q,\dbdim) \ge \left(\frac{q+1}{2q}\right) q^{\dbdim+1}- \binom{\dbdim+q-1}{\dbdim+1}.
\end{align*}  
\end{theorem}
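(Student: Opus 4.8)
The edge count is already established in Lemma~\ref{lem:constr1count}, so the entire task is to show that $\mG^1_{q,\dbdim}$ is path unique; the lower bound on $\gamma(q,\dbdim)$ then follows immediately from the definition of $\gamma$. By Lemma~\ref{lem:adj}, it suffices to prove that for every $k\ge 1$ and every pair of vertices $\bm{u},\bm{v}\in\Sigma_q^\dbdim$ there is at most one walk of length $k$ in $\mG^1_{q,\dbdim}$ from $\bm{u}$ to $\bm{v}$. Recall that a walk of length $k$ in $\mB_{q,\dbdim}$ corresponds to a string $\bm{s}=(s_1,\dots,s_{\dbdim+k})\in\Sigma_q^{\dbdim+k}$, where the walk goes through vertices $\bm{s}_{[1;\dbdim]},\bm{s}_{[2;\dbdim]},\dots,\bm{s}_{[k+1;\dbdim]}$ and uses the $k$ edges $\bm{s}_{[i;\dbdim+1]}$ for $i=1,\dots,k$. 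The walk lies in $\mG^1_{q,\dbdim}$ iff every one of these $k$ windows of length $\dbdim+1$ satisfies conditions~\ref{const1-cond1} and~\ref{const1-cond2}. Fixing $\bm{u}=\bm{s}_{[1;\dbdim]}$ and $\bm{v}=\bm{s}_{[k+1;\dbdim]}$ pins down the first $\dbdim$ and last $\dbdim$ symbols of $\bm{s}$; I must show these boundary conditions, together with edge-legality at every window, determine $\bm{s}$ uniquely.

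\textbf{Key structural observation.} Condition~\ref{const1-cond1} applied to the window starting at position $i$ says $s_i\le s_{i+1}$ for $i=1,\dots,k$, so \emph{any} walk in $\mG^1_{q,\dbdim}$ is given by a \emph{non-decreasing} string $s_1\le s_2\le\dots\le s_{\dbdim+k}$. Now suppose there are two distinct walks $\bm{s}\ne\bm{s}'$ of length $k$ from $\bm{u}$ to $\bm{v}$. They agree on the first $\dbdim$ and last $\dbdim$ coordinates, so they can only differ somewhere in positions $\dbdim+1,\dots,k$ (in particular $k>\dbdim$). Let $j$ be the smallest index with $s_j\ne s'_j$; then $\dbdim<j\le k$ and $s_i=s'_i$ for all $i<j$, and similarly let $j'$ be the largest index where they differ, so $j\le j' < \dbdim+k-\dbdim+1$, i.e. $j'\le k$... wait, more precisely the last $\dbdim$ coordinates agree so $j'\le k$. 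Both strings being non-decreasing and agreeing outside the block $[j,j']$, we have $s_{j-1}=s'_{j-1}=:a$ and $s_{j'+1}=s'_{j'+1}=:b$ with $a\le b$, and both $(s_j,\dots,s_{j'})$ and $(s'_j,\dots,s'_{j'})$ are non-decreasing strings in $[a,b]$. The plan is to derive a contradiction from condition~\ref{const1-cond2}: I will argue that at least one of the two strings must contain a length-$(\dbdim+1)$ window that is non-decreasing and has maximum entry $\le q-2$, which is forbidden.

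\textbf{Executing the contradiction.} Here is the mechanism I expect to use. Because the two non-decreasing strings on $[j,j']$ differ but share endpoints, there is some position $p\in[j,j']$ where, say, $s_p<s'_p$ (swap the roles of $\bm{s},\bm{s}'$ if needed). Consider a window of $\dbdim+1$ consecutive positions containing $p$ and lying inside the range where we have control. In the string $\bm{s}$, around position $p$ the values are relatively small; the candidate obstruction is a window $(s_p,s_{p+1},\dots,s_{p+\dbdim})$ all of whose entries are $\le q-2$. If $s_{p+\dbdim}\le q-2$ we are done since this window is non-decreasing (it always is) and violates~\ref{const1-cond2}, so $\bm{s}$ was not a legal walk — contradiction. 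Otherwise $s_{p+\dbdim}=q-1$; since the string is non-decreasing and ends at $v$'s coordinates which are shared, I push the comparison forward: the symbols must climb to $q-1$ and stay, and then I compare the \emph{positions at which each string first reaches $q-1$} (and more generally first reaches each value $c\le q-1$). The difference at position $p$ means these ``first-hitting positions'' differ for some value; choosing the smaller of the two first-hitting positions for that value gives a window in the corresponding string that is constant-free enough — concretely a window of length $\dbdim+1$ ending just before that value is first attained, hence with all entries $\le$ (that value $-1$) $\le q-2$ when the value is $\le q-1$, contradicting~\ref{const1-cond2} unless the value is forced to jump in a way the other string cannot match. I would formalize this by a clean induction on $j'-j$ or by directly exhibiting the bad window; the bookkeeping of which string owns the bad window, and ensuring the window genuinely fits between the agreed endpoints (this needs $j'-j$ or the surrounding run to be at least $\dbdim$ long, which is where $k>\dbdim$ gets used), is the main obstacle and the part that needs care.

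\textbf{Remark on difficulty.} The non-decreasing structure makes the problem almost one-dimensional, so I expect the proof to be short once the right invariant is chosen; the delicate point is purely combinatorial edge-case handling — guaranteeing that a forbidden monotone window of the exact length $\dbdim+1$ actually occurs inside one of the two walks whenever they diverge, rather than the divergence being ``hidden'' near a boundary where fewer than $\dbdim+1$ free positions are available.
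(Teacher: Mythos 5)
Your skeleton (monotonicity of the walk string from condition~\ref{const1-cond1}, fixed first and last $\dbdim$ symbols, and a forbidden non-decreasing window bounded by $q-2$ from condition~\ref{const1-cond2}) is the right one and matches the paper's, but the decisive step is never executed: the ``first-hitting positions'' comparison is only described as a mechanism you ``expect to use,'' and you yourself flag the bookkeeping of which string owns the bad window as the main unresolved obstacle. As written, this is a plan rather than a proof, and the window you chose to examine is the unhelpful one: you look at $(s_p,s_{p+1},\dots,s_{p+\dbdim})$, the window \emph{starting} at the divergence position $p$, whose maximum entry you cannot control (forcing the case split $s_{p+\dbdim}=q-1$ that derails the argument), and whose internal monotonicity is not even guaranteed when it reaches into the final vertex, since condition~\ref{const1-cond1} only yields $s_i\le s_{i+1}$ for $i\le k$; for the same reason your blanket claim that the whole length-$(\dbdim+k)$ string is non-decreasing is too strong (only the first $k+1$ symbols are).

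The missing observation, which is exactly how the paper closes the argument, is to look \emph{backwards}: for any position $p$ with $\dbdim<p\le k$, the window $(s_{p-\dbdim},\dots,s_p)$ is an edge of the walk, it is non-decreasing because condition~\ref{const1-cond1} applies to all the preceding edge windows, and all of its entries are at most $s_p$; so if $s_p\le q-2$ this edge violates condition~\ref{const1-cond2}. Hence $s_p=q-1$ for every $p\in\{\dbdim+1,\dots,k\}$, and together with the prescribed first and last $\dbdim$ symbols the walk string is completely determined --- no comparison of two walks, no induction on the divergence block, and no first-hitting analysis is needed. This forced-$q-1$ structure is precisely the content of the paper's two claims ($x_1\le\dots\le x_\dbdim$ and $s_{\dbdim+1}=q-1$, extended inductively along the walk); in your contradiction setup it would finish instantly, since $s_p<s'_p\le q-1$ gives $s_p\le q-2$. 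Without it, the proposal has a genuine gap at its central deduction.
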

\begin{proof}

Let $\bm{s}=(s_1,\dots,s_k) \in \Sigma_q^k$ be a walk on $\mG^1_{q,\dbdim}$ from $\bm{x}=(x_1,\dots,x_{\dbdim})$ to $\bm{y}=(y_1,\dots,y_{\dbdim})$. If $k \le 2\dbdim$, then there is nothing to show. If $k > 2\dbdim$, we show that $s_{\dbdim+1}=s_{\dbdim+2}=\dots=s_{k-\dbdim}=q-1$, hence the walk is unique. We split the proof into two parts, and we prove each part separately.
\begin{claim} \label{claim:constr1_1}
We have $x_1 \le \dots \le x_{\dbdim}$.
\end{claim}
\begin{proof}[Proof of Claim~\ref{claim:constr1_1}]
If there exists an $\ell \in \{2,\dots,\dbdim\}$ with $x_{\ell} < x_{\ell-1}$, this would mean that there is an edge $(x_{\ell-1}, x_{\ell},\bm{z})$ in $\mG_{q,\dbdim}^1$ for some $\bm{z} \in \Sigma_q^{\dbdim-1}$ with $x_{\ell} < x_{\ell-1}$, which by~\ref{const1-cond1} of Construction~\ref{constr:1} is a contradiction.
\end{proof}
\begin{claim} \label{claim:constr1_2}
We have $s_{\dbdim+1}=q-1$.
\end{claim}
\begin{proof}[Proof of Claim~\ref{claim:constr1_2}]
We show that the only outgoing edge from~$\bm{x}$ is to $(x_2,\dots,q-1)$. Towards a contradiction, assume that there is an edge $(x_1,\dots,x_{\dbdim},i) \in \mE_{q,\dbdim}^1$ with $i \le q-2$. First, note that $x_{\dbdim} \le i$, as otherwise, similar to the proof of Claim~\ref{claim:constr1_1},  there is an edge $(x_{\dbdim},i, \bm{z})$ in $\mG_{q,\dbdim}^1$ for some $\bm{z} \in \Sigma_q^{\dbdim-1}$ with $x_{\dbdim} > i$, contradicting~\ref{const1-cond1} of Construction~\ref{constr:1}. Combined with Claim~\ref{claim:constr1_1}, it follows that $x_1 \le \dots \le x_{\dbdim} \le i \le q-2$ contradicting~\ref{const1-cond2} of Construction~\ref{constr:1}.
\end{proof}
Similar to the proof of Claim~\ref{claim:constr1_2}, if $k-2\dbdim \ge 2$, by considering the walk going from $(x_2,\dots,x_{\dbdim},s_{\dbdim+1})$ to $\bm{y}$, we have that $s_{\dbdim+2}=q-1$. We can show inductively that $s_{\dbdim+1}=s_{\dbdim+2}=\dots=s_{k-\dbdim}=q-1$.
\end{proof}

Theorem~\ref{prop:genlb} gives the following asymptotic lower bounds.

\begin{corollary}
The following statements hold:
\begin{enumerate}[label=(\roman*), itemsep=0em]
    \item For every fixed $q$, we have $$\lim_{\dbdim\rightarrow \infty}\frac{\gamma(q,\dbdim)}{q^{\dbdim+1}} \ge \frac{q+1}{2q}.$$
    \item For every fixed $\dbdim$, we have
    $$\lim_{q\rightarrow \infty}\frac{\gamma(q,\dbdim)}{q^{\dbdim+1}} \ge \frac{1}{2} - \frac{1}{(\dbdim+1)!}.$$
\end{enumerate}

\end{corollary}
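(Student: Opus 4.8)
The statement to be proven is the Corollary immediately following Theorem~\ref{prop:genlb}, giving two asymptotic lower bounds on $\gamma(q,\dbdim)/q^{\dbdim+1}$. The plan is to simply push the explicit lower bound from Theorem~\ref{prop:genlb} through the respective limits, so the entire argument is an asymptotic analysis of the two terms $\tfrac{q+1}{2q}\cdot q^{\dbdim+1}$ and $\binom{\dbdim+q-1}{\dbdim+1}$ after dividing by $q^{\dbdim+1}$.

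For part~(i), fix $q$ and divide the bound of Theorem~\ref{prop:genlb} by $q^{\dbdim+1}$: the first term is exactly $\tfrac{q+1}{2q}$, independent of $\dbdim$, while the second contributes $\binom{\dbdim+q-1}{\dbdim+1}\big/q^{\dbdim+1}$. I would argue this correction term tends to $0$ as $\dbdim\to\infty$ by noting that $\binom{\dbdim+q-1}{\dbdim+1}=\binom{\dbdim+q-1}{q-2}$ is a polynomial in $\dbdim$ of fixed degree $q-2$, whereas the denominator $q^{\dbdim+1}$ grows exponentially in $\dbdim$ (using $q\ge 2$); hence the ratio goes to $0$ and taking $\liminf$ gives the claimed $\tfrac{q+1}{2q}$. (If $q=1$ the statement is vacuous or trivial since the alphabet and de Bruijn graph degenerate, so I would implicitly assume $q\ge 2$.)

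For part~(ii), fix $\dbdim$ and divide by $q^{\dbdim+1}$ again: the first term is $\tfrac{q+1}{2q}=\tfrac12+\tfrac{1}{2q}\to\tfrac12$ as $q\to\infty$, and the second term $\binom{\dbdim+q-1}{\dbdim+1}\big/q^{\dbdim+1}$ must be shown to converge to $\tfrac{1}{(\dbdim+1)!}$. Here I would write $\binom{\dbdim+q-1}{\dbdim+1}=\frac{(q+\dbdim-1)(q+\dbdim-2)\cdots(q-1)}{(\dbdim+1)!}$, a product of $\dbdim+1$ consecutive integers each of the form $q+O(1)$, so the numerator is $q^{\dbdim+1}\big(1+o(1)\big)$; dividing by $q^{\dbdim+1}(\dbdim+1)!$ yields $\tfrac{1}{(\dbdim+1)!}+o(1)$. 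Combining, the whole expression tends to $\tfrac12-\tfrac{1}{(\dbdim+1)!}$, and passing to $\liminf$ finishes the proof.

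There is essentially no obstacle here: both parts are routine limit computations, and the only mild care needed is to be precise that the binomial coefficient, viewed the right way, is polynomial in the variable that is held fixed-in-degree (degree $q-2$ in $\dbdim$ for part~(i), degree $\dbdim+1$ in $q$ for part~(ii)), so that the competition between polynomial and exponential growth in part~(i), and the leading-coefficient extraction in part~(ii), are transparent. I would keep the write-up to a few lines per part and not belabor the elementary estimates.
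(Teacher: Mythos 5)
Your proposal is correct and matches the paper's (implicit) argument: the corollary is stated as a direct consequence of Theorem~\ref{prop:genlb}, obtained exactly as you do by dividing the bound by $q^{\dbdim+1}$ and observing that $\binom{\dbdim+q-1}{\dbdim+1}/q^{\dbdim+1}$ vanishes for fixed $q$ as $\dbdim\to\infty$ and tends to $1/(\dbdim+1)!$ for fixed $\dbdim$ as $q\to\infty$. No discrepancies to report.
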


\begin{figure*}[t]
\begin{center}
     \definecolor{tableaublue}{HTML}{1F77B4}
\definecolor{tableauorange}{HTML}{FF7F0E}
\definecolor{tableaugreen}{HTML}{2CA02C}
\definecolor{tableaured}{HTML}{D62728}
\definecolor{tableaupurple}{HTML}{9467BD}
\definecolor{tableaubrown}{HTML}{8C564B}
\definecolor{tableaupink}{HTML}{E377C2}
\definecolor{tableaugrey}{HTML}{7F7F7F}
\definecolor{tableaulightgreen}{HTML}{BCBD22}
\definecolor{tableaucyan}{HTML}{17BECF}

\begin{tikzpicture}[->,>=stealth',shorten >=1pt,thick,round/.style={draw,circle,thick,minimum size=5mm,thick},]
\SetGraphUnit{2}

\def\dx{1.2} %
\def\dy{1.6} %
\def\ox{0.75} %

\def\bc{0}
\node[round, fill=black!10!white] at (0*\dx, 0) (N1\bc) [] {$1\bc$};
\node[round, fill=black!10!white] at (1*\dx, 0) (N2\bc) [] {$2\bc$};
\node[round, fill=black!10!white] at (2*\dx, 0) (N3\bc) [] {$3\bc$};
\node[round, fill=black!10!white] at (3*\dx, 0) (N4\bc) [] {$4\bc$};

\node[round, fill=black!10!white] at (0*\dx-\ox,-\dy) (N\bc0) [] {$\bc0$};
\node[round, fill=black!10!white] at (1*\dx-\ox,-\dy) (N\bc4) [] {$\bc4$};
\node[round, fill=black!10!white] at (2*\dx-\ox,-\dy) (N\bc3) [] {$\bc3$};
\node[round, fill=black!10!white] at (3*\dx-\ox,-\dy) (N\bc2) [] {$\bc2$};
\node[round, fill=black!10!white] at (4*\dx-\ox,-\dy) (N\bc1) [] {$\bc1$};

\def\bdx{5*\dx} %
\def\bdy{0} %
\def\bc{1}
\node[round, fill=black!10!white] at (0*\dx+\bdx, -\bdy) (N4\bc) [] {$4\bc$};
\node[round, fill=black!10!white] at (1*\dx+\bdx, -\bdy) (N3\bc) [] {$3\bc$};
\node[round, fill=black!10!white] at (2*\dx+\bdx, -\bdy) (N2\bc) [] {$2\bc$};

\node[round, fill=black!10!white] at (0*\dx+\bdx-\ox,-\dy-\bdy) (N\bc1) [] {$\bc1$};
\node[round, fill=black!10!white] at (1*\dx+\bdx-\ox,-\dy-\bdy) (N\bc2) [] {$\bc2$};
\node[round, fill=black!10!white] at (2*\dx+\bdx-\ox,-\dy-\bdy) (N\bc3) [] {$\bc3$};
\node[round, fill=black!10!white] at (3*\dx+\bdx-\ox,-\dy-\bdy) (N\bc4) [] {$\bc4$};

\def\bdx{9*\dx} %
\def\bdy{0} %
\def\bc{2}
\node[round, fill=black!10!white] at (0*\dx+\bdx, -\bdy) (N4\bc) [] {$4\bc$};
\node[round, fill=black!10!white] at (1*\dx+\bdx, -\bdy) (N3\bc) [] {$3\bc$};

\node[round, fill=black!10!white] at (0*\dx+\bdx-\ox,-\dy-\bdy) (N\bc2) [] {$\bc2$};
\node[round, fill=black!10!white] at (1*\dx+\bdx-\ox,-\dy-\bdy) (N\bc3) [] {$\bc3$};
\node[round, fill=black!10!white] at (2*\dx+\bdx-\ox,-\dy-\bdy) (N\bc4) [] {$\bc4$};

\def\bdx{12*\dx} %
\def\bdy{0} %
\def\bc{3}
\node[round, fill=black!10!white] at (0*\dx+\bdx, -\bdy) (N4\bc) [] {$4\bc$};

\node[round, fill=black!10!white] at (0*\dx+\bdx-\ox,-\dy-\bdy) (N\bc3) [] {$\bc3$};
\node[round, fill=black!10!white] at (1*\dx+\bdx-\ox,-\dy-\bdy) (N\bc4) [] {$\bc4$};

\def\bdx{14*\dx} %
\def\bdy{0} %
\def\bc{4}
\node[round, fill=black!10!white] at (0*\dx+\bdx-\ox,-\dy-\bdy) (N\bc4) [] {$\bc4$};

\Loop[dist=0.8cm, dir=NOWE, color=tableaublue](N00)
\Loop[dist=0.8cm, dir=NOWE, color=tableaublue](N11)
\Loop[dist=0.8cm, dir=NOWE, color=tableaublue](N22)
\Loop[dist=0.8cm, dir=NOWE, color=tableaublue](N33)
\Loop[dist=0.8cm, dir=NOWE, color=tableaublue](N44)

\draw[->, color=black!80] (N10) to [left] node  {} (N00);
\draw[->, color=black!80] (N10) to [left] node  {} (N01);
\draw[->, color=black!80] (N10) to [left] node  {} (N02);
\draw[->, color=black!80] (N10) to [left] node  {} (N03);
\draw[->, color=black!80] (N10) to [left] node  {} (N04);
\draw[->, color=black!80] (N20) to [left] node  {} (N00);
\draw[->, color=black!80] (N20) to [left] node  {} (N01);
\draw[->, color=black!80] (N20) to [left] node  {} (N02);
\draw[->, color=black!80] (N20) to [left] node  {} (N03);
\draw[->, color=black!80] (N20) to [left] node  {} (N04);
\draw[->, color=black!80] (N30) to [left] node  {} (N00);
\draw[->, color=black!80] (N30) to [left] node  {} (N01);
\draw[->, color=black!80] (N30) to [left] node  {} (N02);
\draw[->, color=black!80] (N30) to [left] node  {} (N03);
\draw[->, color=black!80] (N30) to [left] node  {} (N04);
\draw[->, color=black!80] (N40) to [left] node  {} (N00);
\draw[->, color=black!80] (N40) to [left] node  {} (N01);
\draw[->, color=black!80] (N40) to [left] node  {} (N02);
\draw[->, color=black!80] (N40) to [left] node  {} (N03);
\draw[->, color=black!80] (N40) to [left] node  {} (N04);

\draw[->, color=black!80] (N21) to [left] node  {} (N11);
\draw[->, color=black!80] (N21) to [left] node  {} (N12);
\draw[->, color=black!80] (N21) to [left] node  {} (N13);
\draw[->, color=black!80] (N21) to [left] node  {} (N14);
\draw[->, color=black!80] (N31) to [left] node  {} (N11);
\draw[->, color=black!80] (N31) to [left] node  {} (N12);
\draw[->, color=black!80] (N31) to [left] node  {} (N13);
\draw[->, color=black!80] (N31) to [left] node  {} (N14);
\draw[->, color=black!80] (N41) to [left] node  {} (N11);
\draw[->, color=black!80] (N41) to [left] node  {} (N12);
\draw[->, color=black!80] (N41) to [left] node  {} (N13);
\draw[->, color=black!80] (N41) to [left] node  {} (N14);

\draw[->, color=black!80] (N32) to [left] node  {} (N22);
\draw[->, color=black!80] (N32) to [left] node  {} (N23);
\draw[->, color=black!80] (N32) to [left] node  {} (N24);
\draw[->, color=black!80] (N42) to [left] node  {} (N22);
\draw[->, color=black!80] (N42) to [left] node  {} (N23);
\draw[->, color=black!80] (N42) to [left] node  {} (N24);

\draw[->, color=black!80] (N43) to [left] node  {} (N33);
\draw[->, color=black!80] (N43) to [left] node  {} (N34);

\draw[->, color=tableaured] (N11) to [bend right=20] node {} (N12);
\draw[->, color=tableaured] (N11) to [bend right=30] node {} (N13);
\draw[->, color=tableaured] (N11) to [bend right=35] node {} (N14);

\draw[->, color=tableaured] (N22) to [bend right=20] node  {} (N23);
\draw[->, color=tableaured] (N22) to [bend right=30] node  {} (N24);

\draw[->, color=tableaured] (N33) to [bend right=20] node  {} (N34);

\draw[->, color=tableaugreen] (N01) to [bend right=30] node  {} (N11);
\draw[->, color=tableaugreen] (N01) to [bend right=35] node  {} (N12);
\draw[->, color=tableaugreen] (N01) to [bend right=38] node  {} (N13);
\draw[->, color=tableaugreen] (N01) to [bend right=40] node  {} (N14);

\draw[->, color=tableaugreen] (N02) to [bend right=35] node  {} (N22);
\draw[->, color=tableaugreen] (N02) to [bend right=38] node  {} (N23);
\draw[->, color=tableaugreen] (N02) to [bend right=40] node  {} (N24);

\draw[->, color=tableaugreen] (N03) to [bend right=38] node  {} (N33);
\draw[->, color=tableaugreen] (N03) to [bend right=40] node  {} (N34);

\draw[->, color=tableaugreen] (N04) to [bend right=40] node  {} (N44);

\draw[->, color=tableaupurple] (N42) to [bend right=20] node  {} (N21);
\draw[->, color=tableaupurple] (N43) to [bend right=25] node  {} (N31);
\draw[->, color=tableaupurple] (N43) to [bend right=20] node  {} (N32);

\end{tikzpicture}
\end{center}
\vspace{-2.5em}
\caption{\small The graph $\mG^2_{5}$ from Construction~\ref{constr:2}. For compactness, each vertex $(x_1, x_2) \in \Sigma_q^2$ is labeled by $x_1x_2$, e.g., $(0, 1)$ by $01$. Edges satisfying Condition~\ref{const2-cond1} are \textcolor{tableaublue}{\emph{blue}}, edges satisfying Condition~\ref{const2-cond2} are \textcolor{tableaured}{\emph{red}}, edges satisfying Condition~\ref{const2-cond3} are \emph{black}, edges satisfying Condition~\ref{const2-cond4} are \textcolor{tableaugreen}{\emph{green}}, and edges satisfying Condition~\ref{const2-cond5} are \textcolor{tableaupurple}{\emph{purple}}. }
\label{fig:bigdganit}
\end{figure*}

Next, we present our second construction. 
\begin{construction} \label{constr:2}
  Let $ \mG^2_{q}=(\mV^2_{q}, \mE^2_{q})$ be the subgraph of $\mB_{q,2}=(\mV_{q,2}, \mE_{q, 2})$ with $\mV^2_{q} = \mV_{q, 2}$ where the edge $(x_1, x_2, x_3) \in \mE_{q,2}$ is in $\mE^2_{q}$ if and only if it meets one of the following (mutually exclusive) conditions:\begin{enumerate}[label=\Roman*), itemsep=0em]
    \item \label{const2-cond1} $x_1=x_2=x_3$,
    \item \label{const2-cond2} $0<x_1=x_2<x_3$,
    \item \label{const2-cond3} $x_1>x_2$ and $x_2 \leq x_3$,
    \item \label{const2-cond4} $0=x_1<x_2\leq x_3$, or
    \item \label{const2-cond5} $q-1=x_1>x_2>x_3>0$.
  \end{enumerate}
\end{construction}

We partition the vertices in $\mV^2_{q, 2}$ into $q$ blocks indexed from~$0$ to $q-1$. 
Vertex $(x_1, x_2)$ is in block $\min(x_1,x_2)$.
Within each block, we divide the vertices into two parts. We say that $(x_1, x_2)$ is in the \emph{top part} if $x_1>x_2$ and in the \emph{bottom part} otherwise. For clarity, we assign a color to each edge $(x_1,x_2,x_3) \in \mE^2_{q}$ based on which of the above conditions it meets; cf. Fig.~\ref{fig:bigdganit} for a depiction.

More precisely:
\begin{enumerate}[label=(\roman*), itemsep=0em]
    \item Condition~\ref{const2-cond1} ({\em \ecblue}) is met by self-loops.%
    \item Condition~\ref{const2-cond2} ({\em \ecred}) is met by edges from the constant vertex $(x_1, x_1)$ in the bottom part of a block $x_1$, with $x_1>0$, to another vertex in the bottom part of block $x_1$. %
    \item Condition~\ref{const2-cond3} ({\em \ecblack}) is met by edges from the top part of a block to the bottom part of the same block. %
    \item Condition~\ref{const2-cond4} ({\em \ecgreen}) is met by edges from the bottom part in block $0$ to the bottom part of another block. %
    \item Finally, Condition~\ref{const2-cond5} ({\em \ecpurple}) is met by edges from the top part of a block $x_2$, with $1<x_2<q-1$, to the top part of another block $x_3$, with $0<x_3<x_2$. %
\end{enumerate}

\begin{lemma}
  The number of edges in $\mG_{q}^2$ is
  \begin{align*}
    \frac{1}{3} q^3 + \frac{3}{2} q^2 -\frac{23}{6} q + 4.
  \end{align*}
\end{lemma}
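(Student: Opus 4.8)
The plan is to use the fact that the five conditions \ref{const2-cond1}--\ref{const2-cond5} defining $\mE^2_q$ are mutually exclusive, so that $|\mE^2_q|$ is simply the sum of the numbers of triples $(x_1,x_2,x_3) \in \Sigma_q^3$ meeting each individual condition. (Recall that every triple in $\Sigma_q^3$ corresponds to exactly one edge of $\mB_{q,2}$, so there is no separate de Bruijn adjacency check to perform.) I would count the five contributions one at a time, using only the elementary identities $\sum_{i=1}^m i = \binom{m+1}{2}$ and $\sum_{i=1}^m i^2 = \tfrac{m(m+1)(2m+1)}{6}$, and then add the five terms.

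Concretely: Condition \ref{const2-cond1} contributes $q$ edges (the self-loops). Condition \ref{const2-cond4}, $0=x_1<x_2\le x_3$, contributes $\sum_{a=1}^{q-1}(q-a) = \binom{q}{2}$, and Condition \ref{const2-cond2}, $0<x_1=x_2<x_3$, contributes $\sum_{a=1}^{q-2}(q-1-a) = \binom{q-1}{2}$. Condition \ref{const2-cond5}, $q-1=x_1>x_2>x_3>0$, counts the $2$-element subsets of $\{1,\dots,q-2\}$, giving $\binom{q-2}{2}$. The main term comes from Condition \ref{const2-cond3}, $x_1>x_2$ and $x_2\le x_3$: for each value $a$ of $x_2$ there are $q-1-a$ choices of $x_1$ and $q-a$ choices of $x_3$, so this contributes $\sum_{a=0}^{q-1}(q-1-a)(q-a) = \sum_{j=0}^{q-1} j(j+1) = \tfrac{(q-1)q(q+1)}{3}$.

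Summing the five contributions gives
\begin{align*}
  |\mE^2_q| = q + \binom{q}{2} + \binom{q-1}{2} + \binom{q-2}{2} + \frac{(q-1)q(q+1)}{3},
\end{align*}
and putting everything over the common denominator $6$ and simplifying yields $\tfrac13 q^3 + \tfrac32 q^2 - \tfrac{23}{6}q + 4$, as claimed.

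There is no real conceptual obstacle here; the only thing to be careful about is that the summation ranges respect the strict versus non-strict inequalities and the endpoint constraints $x_i \in \{0,\dots,q-1\}$ (for instance, in \ref{const2-cond2} one needs $a \le q-2$ for an admissible $x_3>a$ to exist, and in \ref{const2-cond5} both $x_2,x_3$ must lie in $\{1,\dots,q-2\}$). It is also worth recording why the conditions are pairwise incompatible — this is what makes the counting a clean sum — which follows at once from the patterns of (in)equalities among $x_1,x_2,x_3$: \ref{const2-cond1} forces $x_1=x_2=x_3$; \ref{const2-cond2} forces $x_1=x_2<x_3$; \ref{const2-cond3} forces $x_1>x_2$ with $x_2\le x_3$; \ref{const2-cond4} forces $x_1=0<x_2$; and \ref{const2-cond5} forces $x_1=q-1>x_2>x_3$, and these five patterns are pairwise disjoint for $q\ge 2$.
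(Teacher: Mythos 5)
Your proposal is correct and follows essentially the same route as the paper: since the five conditions of Construction~\ref{constr:2} are mutually exclusive, one counts the triples satisfying each condition separately (obtaining $q$, $\binom{q-1}{2}$, $\sum_{i=1}^{q-1}(i^2+i)$, $\binom{q}{2}$, and $\binom{q-2}{2}$, exactly as in the paper) and sums. Your per-condition counts and the final simplification to $\tfrac13 q^3 + \tfrac32 q^2 - \tfrac{23}{6}q + 4$ all check out, so there is nothing to fix.
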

\begin{proof}
  We count the number of triples $(x_1, x_2, x_3) \in \Sigma^3_q$ that fulfill each condition.
  Condition \ref{const2-cond1} is met by $q$ triples, Condition \ref{const2-cond2} by ${q-1 \choose 2}$ triples, Condition~\ref{const2-cond3} by $\sum_{i=1}^{q-1} i^2 + i = 2\binom{q}{2}+2\binom{q}{3}$ triples, Condition~\ref{const2-cond4} by ${q \choose 2}$ triples, and Condition~\ref{const2-cond5} by $q-2 \choose 2$ triples.
  Simplifying gives the desired expression.
\end{proof}

\begin{example}
The graph $\mG^2_{5}$ from Construction~\ref{constr:2} is shown in Fig.~\ref{fig:bigdganit}.
  The color of each edge corresponds to the condition in Construction~\ref{constr:2} it fulfills.
  The vertices are partitioned according to the blocks of the construction, with the vertices in the top part arranged above the vertices of the bottom part.
  \label{ex:constr2}
\end{example}

\begin{theorem}\label{prop:lbm2}
The graph $\mG^2_{q}$ is path unique. In particular, we have
\begin{align*}
    \gamma(q,2) \ge  \frac{1}{3} q^3 + \frac{3}{2} q^2 -\frac{23}{6} q + 4.
\end{align*}
\end{theorem}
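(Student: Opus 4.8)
The plan is to prove the path-uniqueness assertion directly from the combinatorial structure of walks in $\mG^2_{q}$; the bound $\gamma(q,2)\ge\frac13q^3+\frac32q^2-\frac{23}{6}q+4$ is then immediate, since $\mG^2_{q}\in\Gamma(q,2)$ and its edge count was just computed.

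As preparation I would first restate Construction~\ref{constr:2} as an explicit rule for when a triple $(x,y,z)\in\Sigma_q^3$ is an edge, sorted by the order relation between $x$ and $y$, and extract two facts. \emph{(i)} If $(x,y,z)$ is an edge with $x<y$ then $x=0$ --- while travelling through the graph, the first coordinate of the current vertex can increase only out of the value $0$. \emph{(ii)} If $x=y$ is an edge then $z\ge x$, and if that triple is immediately followed by another edge-triple then $z=x$ (by \emph{(i)}; for $x=0$ this is automatic, since $(0,0,z)$ is an edge only when $z=0$). Iterating \emph{(ii)} yields the \emph{stickiness lemma}: if $\bm{s}=(s_1,\dots,s_{k+2})$ is a walk of length $k$ with $s_i=s_{i+1}$ for some $i\le k$, then $s_i=s_{i+1}=\dots=s_{k+1}$; in particular the first coordinate $s_{k+1}$ of the endpoint vertex is forced to equal $s_i$. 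I would also list the out-edges by vertex type: a vertex $(x_1,x_2)$ with $1\le x_1<x_2$ is a sink; $(0,0)$ has only its self-loop; a constant vertex $(a,a)$ with $a\ge1$ has its self-loop and the \ecred edges to the sinks $(a,w)$, $w>a$; a vertex $(0,v)$ with $v\ge1$ has edges only to $(v,w)$, $w\ge v$ (i.e.\ to $(v,v)$ or to a sink); and a top vertex $(u,v)$ with $u>v$ has edges to $(v,w)$, $w\ge v$, together with --- precisely when $u=q-1$ and $v\ge1$ --- the \ecpurple edges to the top vertices $(v,w)$, $0<w<v$.

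The core step is to show that between any two vertices there is at most one walk of each length $k$. Suppose $\bm{s}=(s_1,\dots,s_{k+2})$ and $\bm{t}=(t_1,\dots,t_{k+2})$ are two walks of length $k$ with the same first and last vertex, and let $j$ be the smallest index with $s_j\ne t_j$; since the endpoints agree, $3\le j\le k$ (so nothing is to prove for $k\le2$). Then the common vertex $(s_{j-2},s_{j-1})$ has two distinct out-edges, to $(p,s_j)$ and $(p,t_j)$ where $p:=s_{j-1}$; say $s_j<t_j$. A sink and $(0,0)$ have out-degree at most one, so $(s_{j-2},p)$ is a constant vertex $(a,a)$ with $a\ge1$, a vertex $(0,v)$ with $v\ge1$, or a top vertex. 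In the first two cases, and in the top-vertex case when $p\ge1$ and $s_{j-2}\ne q-1$, both out-edges target vertices $(p,w)$ with $w\ge p\ge1$, hence $t_j>p$ and $(p,t_j)$ is a sink; but this sink is the $(j-1)$-th vertex of $\bm{t}$ and $j-1<k+1$, so $\bm{t}$ would have to leave it --- contradiction. It remains to treat the top vertex $(q-1,p)$ with $p\ge1$ and an arbitrary top vertex $(u,0)$. For $(q-1,p)$: either $t_j>p$, again a sink (contradiction), or $s_j<t_j\le p$, in which case each walk is forced into a constant vertex at a position $\le k$ --- $(s_j,s_j)$ for $\bm{s}$ and $(t_j,t_j)$ for $\bm{t}$, since any other continuation out of the post-divergence vertices $(p,s_j)$, $(p,t_j)$ runs into a sink before the last position --- and the stickiness lemma then makes the endpoint's first coordinate equal to both $s_j$ and $t_j$, so $s_j=t_j$, a contradiction. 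For $(u,0)$ the two out-edges go to $(0,s_j)$ and $(0,t_j)$ with $t_j\ge1$: the branch through $(0,0)$ is frozen at $0$ and ends with first coordinate $0$, while a branch through $(0,w)$ with $w\ge1$ is forced into $(w,w)$ one step later and ends with first coordinate $w\ge1$ --- incompatible, so again $s_j=t_j$. Hence $\bm{s}=\bm{t}$, and $\mG^2_{q}$ is path unique.

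I expect the main obstacle to be the divergence case at $(q-1,p)$ and the parallel block-$0$ case $(u,0)$: these are the only places where both walks can legitimately keep going after they split, so the contradiction cannot come from one branch dying immediately and must be drawn from the stickiness lemma, after checking that every alternative continuation from the two post-divergence vertices is a sink that is hit before the last position. The other point requiring care is verifying that the edge rule \emph{(i)}--\emph{(ii)} and the out-neighbourhood list are exhaustive across the five mutually exclusive conditions of Construction~\ref{constr:2}, with particular attention to the boundary vertices having a coordinate equal to $0$ or $q-1$.
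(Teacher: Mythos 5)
Your proof is correct, but it takes a genuinely different route from the paper's. The paper argues algebraically: Claims~\ref{claim:constr2_2} and~\ref{claim:constr2_1} show that every walk of length three $(x_1,\dots,x_5)$ in $\mG^2_q$ satisfies $x_3=x_4$, and that deleting/inserting this repeated symbol is a bijection between walks of length three and walks of length two with the same endpoints; hence the adjacency matrix satisfies $\bm{A}^3=\bm{A}^2$, so $\bm{A}^k=\bm{A}^2$ for all $k\ge 2$, and since $\bm{B}_{q,2}$ and $\bm{B}_{q,2}^2$ are $0$--$1$ matrices, so are all powers of $\bm{A}$, which by Lemma~\ref{lem:adj} is path uniqueness. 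You instead work directly with a pair of equal-length walks sharing both endpoints, locate the first divergence, and kill it case by case using your stickiness lemma. Both proofs exploit the same structural fact---a long walk in $\mG^2_q$ must spend all of its middle steps on a self-loop---but the paper's reduction to length-three walks confines the case analysis to single edges and their colors, whereas your route requires the more delicate global analysis at the divergence vertex (the $(q-1,p)$ and $(u,0)$ cases), in exchange for being self-contained and not passing through matrix powers. One spot you should make explicit: in the $(q-1,p)$ case with $s_j<t_j\le p$, your claim that every continuation out of $(p,s_j)$ other than into the constant vertex $(s_j,s_j)$ runs into a sink relies on $(p,s_j)$ having no \ecpurple out-edges; this is true, but only because $p<q-1$ (as $(q-1,p)$ is a top vertex), and without that observation the statement ``any other continuation runs into a sink'' would be false.
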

Before we prove the theorem, we compare the bounds obtained from Theorem~\ref{prop:genlb} and Theorem~\ref{prop:lbm2}.

\begin{observation} \label{obs:compare}
    The lower bound derived in Theorem~\ref{prop:lbm2} is tighter than the one derived in Theorem~\ref{prop:genlb} for $\dbdim=2$ and for all $q \ge 3$. For $\dbdim=2$ and $q=2$ the two bounds are the same.
\end{observation}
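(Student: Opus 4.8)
The plan is to reduce the comparison to a single polynomial identity in $q$ and read off the sign. Both quantities being compared are lower bounds on $\gamma(q,2)$, so I would first write each as a rational function of $q$ over a common denominator and then subtract.

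First I would specialize the bound from Theorem~\ref{prop:genlb} to $\dbdim=2$. Substituting gives $\left(\frac{q+1}{2q}\right)q^{3}-\binom{q+1}{3}$. Here $\left(\frac{q+1}{2q}\right)q^{3}=\frac{(q+1)q^{2}}{2}$, and the binomial coefficient expands as $\binom{q+1}{3}=\frac{(q+1)q(q-1)}{6}=\frac{q^{3}-q}{6}$. Putting both terms over denominator $6$ yields
\begin{align*}
  \left(\frac{q+1}{2q}\right)q^{3}-\binom{q+1}{3}
  = \frac{3q^{3}+3q^{2}}{6}-\frac{q^{3}-q}{6}
  = \frac{2q^{3}+3q^{2}+q}{6}.
\end{align*}
In parallel I would rewrite the bound from Theorem~\ref{prop:lbm2} over the same denominator: $\frac{1}{3}q^{3}+\frac{3}{2}q^{2}-\frac{23}{6}q+4 = \frac{2q^{3}+9q^{2}-23q+24}{6}$.

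With both bounds expressed over $6$, the comparison becomes a matter of subtracting the numerators. The key step is the computation
\begin{align*}
  \frac{2q^{3}+9q^{2}-23q+24}{6}-\frac{2q^{3}+3q^{2}+q}{6}
  = \frac{6q^{2}-24q+24}{6}
  = q^{2}-4q+4 = (q-2)^{2}.
\end{align*}
This identity is the heart of the argument: the difference between the two lower bounds is exactly $(q-2)^{2}$. Since $(q-2)^{2}\ge 0$ for all integers $q$, with equality if and only if $q=2$, I can immediately conclude that the bound of Theorem~\ref{prop:lbm2} strictly exceeds that of Theorem~\ref{prop:genlb} for every $q\ge 3$, and that the two bounds coincide at $q=2$, which is precisely the statement of the observation.

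I do not expect a genuine mathematical obstacle here; the argument is a direct algebraic comparison once both expressions are normalized. The only place demanding care is the binomial expansion $\binom{q+1}{3}=\frac{q^{3}-q}{6}$ and the bookkeeping when clearing denominators, since a sign or coefficient slip there would corrupt the clean factorization $(q-2)^{2}$. I would therefore double-check the numerator subtraction term by term before asserting the factorization, and note as a sanity check that the leading cubic terms cancel (as they must, since both bounds have the same dominant behavior $\tfrac{1}{3}q^{3}$), leaving a quadratic whose perfect-square form transparently encodes both the strict inequality for $q\ge 3$ and the equality at $q=2$.
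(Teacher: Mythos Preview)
Your argument is correct. The algebra is clean: specializing the Theorem~\ref{prop:genlb} bound to $\dbdim=2$ gives $\frac{(q+1)q(2q+1)}{6}=\frac{2q^{3}+3q^{2}+q}{6}$, the Theorem~\ref{prop:lbm2} bound is $\frac{2q^{3}+9q^{2}-23q+24}{6}$, and their difference is $(q-2)^{2}$, from which both claims of the observation follow immediately.

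The paper itself offers no proof of this observation; it is simply asserted, presumably because the verification is routine polynomial arithmetic (and the table in the appendix confirms the numerical values for small $q$). Your explicit factorization $(q-2)^{2}$ is a nice addition, since it makes transparent in a single stroke both why $q=2$ is the unique point of equality and why the gap grows quadratically thereafter.
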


To prove Theorem~\ref{prop:lbm2}, we require the following two claims.

\begin{restatable}{claim}{claimconstrtwotwo} \label{claim:constr2_2}
  For any walk of length 3 $(x_1, x_2, x_3, x_4, x_5)$ in~$\mG^2_q$ it holds that~$x_3=x_4$.
\end{restatable}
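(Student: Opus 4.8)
The plan is to look only at the three vertices touched by the middle edge of the walk and to exploit the fact that $\mG^2_q$ has very restricted behaviour at certain ``source'' and ``sink'' vertices. Write the length-three walk as three consecutive edges $e_1=(x_1,x_2,x_3)$, $e_2=(x_2,x_3,x_4)$, $e_3=(x_3,x_4,x_5)$ of $\mG^2_q$, where $e_i$ runs from the vertex $(x_i,x_{i+1})$ to the vertex $(x_{i+1},x_{i+2})$. Thus $e_1$ ends at $(x_2,x_3)$ and $e_3$ starts at $(x_3,x_4)$, and the claim is that $(x_3,x_4)$, the head of $e_2$ and the tail of $e_3$, satisfies $x_3=x_4$.

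The first step is to record two structural facts about $\mG^2_q$, each proved by inspecting Conditions~\ref{const2-cond1}--\ref{const2-cond5} of Construction~\ref{constr:2} one at a time.
\begin{enumerate}
  \item[(a)] If $0<a<b$, then $(a,b)$ has no outgoing edge in $\mG^2_q$: a candidate edge $(a,b,c)$ violates \ref{const2-cond1} and \ref{const2-cond2} because $a\ne b$, violates \ref{const2-cond3} and \ref{const2-cond5} because $a<b$, and violates \ref{const2-cond4} because $a>0$.
  \item[(b)] If $a>b$ and in addition $b=0$ or $a=q-1$, then $(a,b)$ has no incoming edge in $\mG^2_q$: a candidate edge $(c,a,b)$ violates \ref{const2-cond1}--\ref{const2-cond4} because each of these, applied to $(c,a,b)$, forces $a\le b$, whereas $a>b$; and it violates \ref{const2-cond5} because that condition additionally requires $q-1>a>b>0$, which is impossible when $b=0$ or $a=q-1$.
\end{enumerate}

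The second step is a proof by contradiction. Suppose $x_3\ne x_4$. Since $e_2=(x_2,x_3,x_4)\in\mE^2_q$ it satisfies one of the five conditions, and \ref{const2-cond1} would force $x_3=x_4$, so it satisfies one of \ref{const2-cond2}--\ref{const2-cond5}. If \ref{const2-cond2} or \ref{const2-cond4} holds, then $0<x_3<x_4$, so by~(a) the vertex $(x_3,x_4)$ has no outgoing edge, contradicting the presence of $e_3$. If \ref{const2-cond3} holds, then $x_2>x_3$ and, as $x_3\ne x_4$, also $x_3<x_4$; when $x_3>0$ this contradicts $e_3$ via~(a), and when $x_3=0$ the vertex $(x_2,x_3)=(x_2,0)$ has $x_2>0$, so by~(b) it has no incoming edge, contradicting the presence of $e_1$. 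Finally, if \ref{const2-cond5} holds, then $x_2=q-1>x_3>x_4>0$, so $(x_2,x_3)=(q-1,x_3)$ with $q-1>x_3>0$ has no incoming edge by~(b), again contradicting $e_1$. All cases being impossible, $x_3=x_4$.

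I do not expect a real obstacle: the whole argument rests on facts~(a) and~(b), and these are a one-line check per defining condition — four of the five conditions impose an order relation (or equality) on the last two coordinates of an edge, and the fifth pins the first coordinate to $0$ or $q-1$, each incompatible with the relevant hypothesis. The only place that calls for a little care is Condition~\ref{const2-cond3}, the most permissive one, where the ``sink'' argument via~(a) must be supplemented by the ``source'' argument via~(b) in the sub-case $x_3=0$.
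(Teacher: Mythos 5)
Your proof is correct: facts (a) and (b) check out against Conditions~\ref{const2-cond1}--\ref{const2-cond5}, and the case analysis on the middle edge $(x_2,x_3,x_4)$ covers all five conditions, with Condition~\ref{const2-cond1} directly giving $x_3=x_4$ and each of \ref{const2-cond2}--\ref{const2-cond5} refuted by the nonexistence of $e_3$ or $e_1$. The paper organizes the argument differently: it cases on the condition met by the \emph{last} edge $(x_3,x_4,x_5)$, notes that the blue and red cases already give $x_3=x_4$, and in the black, green and purple cases chains backwards, arguing that the forced type of the middle edge leaves no admissible first edge $(x_1,x_2,x_3)$. Your pivot on the middle edge, together with the two explicit source/sink lemmas (vertices $(a,b)$ with $0<a<b$ have no out-edges; vertices $(a,b)$ with $a>b$ and $b=0$ or $a=q-1$ have no in-edges), makes the underlying structural reason uniform and reusable --- the paper in fact uses instances of your fact (b) implicitly each time it asserts ``there exists no edge $(x_1,x_2,x_3)$'' --- at the cost of a slightly longer setup; the paper's backward-chaining version is more ad hoc but marginally shorter. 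Both are complete, so this is a matter of presentation rather than substance.
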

\begin{proof}
      Let $(x_1, x_2, x_3, x_4, x_5)$ be a walk of length 3 in $\mG^2_{q}$. We perform a case distinction based on the color of the last edge $(x_3, x_4, x_5)$.
  If $(x_3, x_4, x_5)$ is red or blue then $x_3=x_4$ holds. It remains to show that $(x_3, x_4, x_5)$ cannot be black, green or purple.
  \begin{itemize}
    \item \emph{Assume $(x_3, x_4, x_5)$ is \ecblack.} Since we have $x_3>x_4$ the edge $(x_2, x_3, x_4)$ can only be \ecpurple, i.e., $q-1=x_2>x_3>x_4>0$. However, then there exists no edge $(x_1, x_2, x_3)$.  
    \item \emph{Assume $(x_3, x_4, x_5)$ is \ecgreen.} We have $0=x_3<x_4\leq x_5$. Then, $(x_2, x_3, x_4)$ can only be \ecblack implying $x_2\neq 0$ and $x_3=0$ in which case there exists no edge $(x_1, x_2, x_3)$.
    \item \emph{Assume $(x_3, x_4, x_5)$ is \ecpurple.} If the edge $(x_3, x_4, x_5)$ is \ecpurple then $q-1=x_3>x_4>x_5>0$ and there exists no possible edge $(x_2, x_3, x_4)$. \qedhere
  \end{itemize}
\end{proof}

\begin{restatable}{claim}{claimconstrtwoone} \label{claim:constr2_1}
  For any $x_1, x_2, x_4, x_5 \in \Sigma_q$, there exists an $x_3 \in \Sigma_q$ such that $(x_1, x_2, x_3, x_4, x_5)$ is a walk of length~3 in $\mG^2_{q}$ if and only if $(x_1, x_2, x_4, x_5)$ is a walk in $\mG^2_{q}$.
  Furthermore, if $x_3$ exists, it is unique.
\end{restatable}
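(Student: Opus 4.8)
The plan is to reduce both directions of the equivalence to a single \emph{edge-splitting} statement about the middle vertex, and to obtain the uniqueness assertion for free from Claim~\ref{claim:constr2_2}. First, \textbf{uniqueness}: if $x_3$ is such that $(x_1,x_2,x_3,x_4,x_5)$ is a walk of length three, then Claim~\ref{claim:constr2_2} forces $x_3=x_4$. Hence at most one value of $x_3$ can work, which settles uniqueness immediately and tells us that the only candidate we ever need to consider is $x_3=x_4$. With $x_3=x_4$ the three edges of the length-three walk are $(x_1,x_2,x_4)$, $(x_2,x_4,x_4)$, $(x_4,x_4,x_5)$, while the length-two walk $(x_1,x_2,x_4,x_5)$ consists of the edges $(x_1,x_2,x_4)$ and $(x_2,x_4,x_5)$. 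The edge $(x_1,x_2,x_4)$ is common to both, so the whole claim reduces to the following: \emph{whenever $(x_1,x_2,x_4)\in\mE^2_q$, both $(x_2,x_4,x_4)$ and $(x_4,x_4,x_5)$ lie in $\mE^2_q$ if and only if $(x_2,x_4,x_5)\in\mE^2_q$.}

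Writing $u=x_2$, $v=x_4$, $w=x_5$, I would first translate each of the three relevant triples into explicit membership conditions using \ref{const2-cond1}--\ref{const2-cond5}: a short inspection gives that $(v,v,w)\in\mE^2_q$ iff $w=v$ or $0<v<w$; that $(u,v,v)\in\mE^2_q$ iff it is \emph{not} the case that $0<u<v$; and that $(u,v,w)\in\mE^2_q$ iff one of the five conditions holds. The forward direction (both split edges imply the combined edge) is then a short case split on whether $w=v$ or $0<v<w$, matching each possibility to one of conditions~\ref{const2-cond1}--\ref{const2-cond4}, and it requires no further hypotheses.

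The backward direction is where the hypothesis $(x_1,x_2,x_4)\in\mE^2_q$ becomes essential, and I expect this to be the main obstacle. Splitting $(u,v,w)$ according to which of \ref{const2-cond1}--\ref{const2-cond5} it satisfies, conditions \ref{const2-cond1}, \ref{const2-cond2}, and \ref{const2-cond4} split cleanly, but two situations threaten the equivalence: condition~\ref{const2-cond3} with $v=0<w$, and condition~\ref{const2-cond5} (the purple edges), where $(u,v,w)$ is an edge yet $(v,v,w)$ is not. The observation that resolves both is a characterization of which vertices of $\mG^2_q$ have an incoming edge: a top-part vertex $(u,v)$ (i.e.\ $u>v$) receives an edge only through condition~\ref{const2-cond5}, and such an edge exists if and only if $u<q-1$ and $v>0$. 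Consequently the two dangerous cases correspond exactly to top-part vertices $(u,v)=(x_2,x_4)$ with $u=q-1$ or $v=0$, which have \emph{no} incoming edge; since $(x_1,x_2,x_4)\in\mE^2_q$ guarantees $(x_2,x_4)$ does have one, both cases are excluded.

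I would therefore isolate this incoming-edge characterization as the crucial sub-step, prove it by the same condition-by-condition inspection, and feed it into the backward case analysis to finish. Reassembling the pieces: the forward edge-splitting direction yields the forward implication of the claim, the backward direction together with the incoming-edge characterization yields the backward implication (taking $x_3=x_4$), and Claim~\ref{claim:constr2_2} yields the uniqueness of $x_3$.
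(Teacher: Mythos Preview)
Your overall strategy matches the paper's: both invoke Claim~\ref{claim:constr2_2} to force $x_3=x_4$, obtain uniqueness from that, and then finish by a colour-based case analysis on the two remaining edges. Your forward edge-splitting direction (the pair $(x_2,x_4,x_4),(x_4,x_4,x_5)$ implies $(x_2,x_4,x_5)$) is exactly the case distinction on the colour of $(x_3,x_3,x_5)$ and then of $(x_2,x_3,x_3)$ that the paper carries out. Where you go beyond the paper is in the backward direction of the equivalence: the paper's proof only establishes that a length-three walk yields a length-two walk (together with uniqueness), whereas you also supply the converse. Your incoming-edge characterisation for top-part vertices---that $(u,v)$ with $u>v$ receives an edge only via Condition~\ref{const2-cond5}, hence only when $u<q-1$ and $v>0$---is precisely what is needed to exclude the two obstructive subcases (Condition~\ref{const2-cond3} with $x_4=0$, and Condition~\ref{const2-cond5}) by exploiting the hypothesis $(x_1,x_2,x_4)\in\mE^2_q$. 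This extra step is correct and renders your argument a complete proof of the ``if and only if'' as stated, which the paper's own write-up leaves implicit.
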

\begin{proof}
  Claim~\ref{claim:constr2_2} shows that for any walk of length 3 $(x_1, x_2, x_3, x_4, x_5)$ in $\mG^2_q$ it holds that $x_3=x_4$.
  From the claim it directly follows that for any walk of length 2 with the same start and end vertex $(x_1, x_2, x_4, x_5)$ there can be at most one $x_3$ such that $(x_1, x_2, x_3, x_4, x_5)$ is a walk of length 3 in $\mG^2_q$.

    We first show that if $(x_1, x_2, x_3, x_4, x_5)$ is a walk in $\mathcal{G}^{2}_q$ then so is $(x_1, x_2, x_4, x_5)$, i.e., if $(x_1, x_2, x_3)$, $(x_2, x_3, x_4)$, and $(x_3, x_4, x_5)$ are edges in $\mG^2_q$, then so are $(x_1, x_2, x_4)$ and $(x_2, x_4, x_5)$.

  Since $x_3=x_4$, if $(x_1, x_2, x_3)$ is an edge in $\mG^2_q$, then so is $(x_1, x_2, x_4)$. 
  We proceed to show that $(x_2, x_4, x_5) = (x_2, x_3, x_5)$ is an edge in $\mG^2_q$ by case distinction on the color of the edges $(x_3, x_3, x_5)$ and $(x_2, x_3, x_3)$.
  An edge of the form $(x_3, x_3, x_5)$ can be \ecblue or \ecred.
  \begin{itemize}
    \item If $(x_3, x_3, x_5)$ is \ecblue, then $x_3=x_4=x_5$ and $(x_2, x_4, x_5)$ exists and equals $(x_2, x_3, x_4)$.
    \item If $(x_3, x_3, x_5)$ is \ecred, then $x_3<x_5$ and $(x_2, x_3, x_3)$ can be \ecblue, \ecblack, or \ecgreen.
    \begin{itemize}
      \item If $(x_2, x_3, x_3)$ is \ecblue, then $x_2=x_3$ and $(x_2, x_4, x_5) = (x_3, x_4, x_5)$.
      \item If $(x_2, x_3, x_3)$ is \ecblack, then $x_2 > x_3 = x_4<x_5$ and $(x_2, x_4, x_5)$ is also \ecblack.
      \item If $(x_2, x_3, x_3)$ is \ecgreen, then $x_2<x_4<x_5$ and $(x_2, x_4, x_5)$ is also \ecgreen.
    \end{itemize}
  \end{itemize}

  It remains to show that if $(x_1, x_2, x_4)$ and $(x_2, x_4, x_5)$ are edges in $\mG^2_q$, then so are $(x_1, x_2, x_3)=(x_1, x_2, x_4)$, $(x_2, x_3, x_4)=(x_2, x_4, x_4)$, and $(x_3, x_4, x_5)=(x_4, x_4, x_5)$.
  Clearly, if $(x_1, x_2, x_3)$ exists, then $(x_1, x_2, x_4)$ is the same edge and also exists. We proceed by case distinction on the color of $(x_2, x_4, x_5)$.\begin{itemize}
      \item If $(x_2, x_4, x_5)$ is \ecblue, then $x_2=x_3=x_4=x_5$ and the edges $(x_2, x_3, x_4)$ and $(x_3, x_4, x_5)$ exist and are \ecblue.
      \item If $(x_2, x_4, x_5)$ is \ecred, then $0<x_2 = x_4 < x_5$ and $(x_2, x_3, x_4)$ is \ecblue and $(x_3, x_4, x_5)$ is \ecred.
      \item If $(x_2, x_4, x_5)$ is \ecblack, then $x_2 > x_4$ and $x_4 \leq x_5$. The edge $(x_2, x_3, x_4)$ is \ecblack and $(x_3, x_4, x_5)$ is \ecblue if $x_4 = x_5$ and \ecred if $x_4 < x_5$.
      \item If $(x_2, x_4, x_5)$ is \ecgreen, then $0=x_2<x_4\leq x_5$. The edge $(x_2, x_3, x_4)$ is also \ecgreen. If $x_4 = x_5$, $(x_3, x_4, x_5)$ is \ecblue and if $x_4 < x_5$ it is \ecred.
      \item The edge $(x_2, x_4, x_5)$ cannot be purple, since if it were, then $q-1 = x_2 > x_4$ and $(x_1, x_2, x_4)$ does not exist. \qedhere
  \end{itemize}
\end{proof}

\begin{proof}[Proof of Theorem~\ref{prop:lbm2}] 
  Let $\bm{A}$ be the adjacency matrix of $\mG^2_{q}$. From Claim~\ref{claim:constr2_1} it follows that $\bm{A}^2=\bm{A}^3$. {Induction on $i$ shows that} $\bm{A}^2=\bm{A}^i$ for all integers $i \ge 2$. Since $\bm{B}_{q,\dbdim}$ and $\bm{B}_{q,\dbdim}^2$ have no entry greater than 1, neither have $\bm{A}$ and $\bm{A}^2$.
\end{proof}

\section{An Upper Bound on $\gamma(q,\dbdim)$} \label{sec:upperbound}
Inspired by the upper bound on $\gamma(q, 1)$ in~\cite{zhan2012Extremal}, in this section we give a general upper bound on $\gamma(q,\dbdim)$. We will need the following notation.

\begin{notation}
For a positive integer $k$ we denote by $\eta(q,\dbdim,k)$ the maximum number of distinct walks in $\mB_{q,\dbdim}$ of length $k$ that have at least one edge in common. 
\end{notation}

\begin{theorem} \label{prop:genub}
  For any positive integer $k$, we have
  \begin{align*}
    \gamma(q, \dbdim) \leq q^{\dbdim+1} - \frac{q^{\dbdim+k} - \gamma(q^\dbdim,1)}{\eta(q, \dbdim, k)}.
  \end{align*}
\end{theorem}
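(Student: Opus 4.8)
The plan is to bound the number of edges of a path unique subgraph $\mG$ of $\mB_{q,\dbdim}$ by counting \emph{length-$k$ walks} in $\mG$ and comparing with the full graph $\mB_{q,\dbdim}$. Observe that $\mB_{q,\dbdim}$ has exactly $q^{\dbdim+k}$ walks of length $k$ (a walk of length $k$ corresponds to a string in $\Sigma_q^{\dbdim+k}$), and the adjacency matrix of $\mB_{q,\dbdim}$ is $\bm{B}_{q,\dbdim}$, which equals $\bm{B}_{q^\dbdim,1}$ up to relabeling, so that $\bm{B}_{q,\dbdim}^k$ for $k\ge 1$ is a $0$/$1$ matrix; in fact for $k=\dbdim$ one gets the all-ones matrix on $q^\dbdim \times q^\dbdim$. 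By Lemma~\ref{lem:adj}, if $\mG$ has adjacency matrix $\bm{A}$, then $\mG$ being path unique means each $\bm{A}^k$ has entries in $\{0,1\}$, so the number of length-$k$ walks in $\mG$ is exactly the number of nonzero entries of $\bm{A}^k$, which is at most $q^{2\dbdim}$, and more precisely at most $\gamma(q^\dbdim,1)$ once we also use that $\bm{A}^{k}$ must itself be a path-unique-type matrix. The key point is that $\bm{A}^k$ is the adjacency matrix of the ``$k$-step'' subgraph sitting inside $\bm{B}_{q,\dbdim}^k = \bm{B}_{q^\dbdim,1}$ seen as a graph on $q^\dbdim$ vertices; this $k$-step graph is path unique whenever $\mG$ is, and it is a subgraph of the complete graph on $q^\dbdim$ vertices, so its edge count is at most $\gamma(q^\dbdim,1)$.

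**Main counting inequality.** Let $W_k(\mG)$ denote the number of length-$k$ walks in $\mG$. On one hand, $W_k(\mG) \le \gamma(q^\dbdim,1)$ by the argument above (number of nonzero entries of $\bm{A}^k$). On the other hand, I want a lower bound on $W_k(\mG)$ in terms of $|\mE|$. For this, partition the $q^{\dbdim+k}$ length-$k$ walks of the \emph{full} graph $\mB_{q,\dbdim}$ according to which edges they use: each length-$k$ walk of $\mB_{q,\dbdim}$ that uses only edges of $\mG$ is a length-$k$ walk of $\mG$. Conversely, a length-$k$ walk of $\mB_{q,\dbdim}$ fails to be a walk of $\mG$ exactly when it uses at least one edge of $\mE_{q,\dbdim}\setminus\mE$. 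Each missing edge $e \notin \mE$ lies on at most $\eta(q,\dbdim,k)$ length-$k$ walks of $\mB_{q,\dbdim}$ (by the definition of $\eta$), so the number of length-$k$ walks of $\mB_{q,\dbdim}$ destroyed by removing all of $\mE_{q,\dbdim}\setminus\mE$ is at most $(q^{\dbdim+1}-|\mE|)\cdot\eta(q,\dbdim,k)$. Hence
\begin{align*}
  W_k(\mG) \;\ge\; q^{\dbdim+k} - (q^{\dbdim+1}-|\mE|)\cdot \eta(q,\dbdim,k).
\end{align*}
Combining with $W_k(\mG)\le \gamma(q^\dbdim,1)$ and rearranging gives
\begin{align*}
  q^{\dbdim+1}-|\mE| \;\ge\; \frac{q^{\dbdim+k}-\gamma(q^\dbdim,1)}{\eta(q,\dbdim,k)},
\end{align*}
i.e. $|\mE| \le q^{\dbdim+1} - \frac{q^{\dbdim+k}-\gamma(q^\dbdim,1)}{\eta(q,\dbdim,k)}$; taking the maximum over all path unique $\mG$ yields the claim.

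**Anticipated obstacle.** The delicate step is justifying $W_k(\mG)\le \gamma(q^\dbdim,1)$ rather than just the crude bound $W_k(\mG)\le q^{2\dbdim}$. I need to argue that the $k$-step graph $\mG^{(k)}$ on vertex set $\mV_{q,\dbdim}$ whose adjacency matrix is $\bm{A}^k$ is (i) a subgraph of the $k$-step graph of $\mB_{q,\dbdim}$, which (since $\bm{B}_{q,\dbdim}^{\dbdim}$ is all-ones and $\bm{B}_{q,\dbdim}^k$ for $k\ge\dbdim$ has all entries $1$) is the complete directed graph with loops on $q^\dbdim$ vertices, and (ii) path unique, because any length-$j$ walk in $\mG^{(k)}$ corresponds to a length-$jk$ walk in $\mG$, and path uniqueness of $\mG$ forbids two such. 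Then $\mG^{(k)}$ is a path unique subgraph of the complete loopy digraph on $q^\dbdim$ vertices, so $|\mE(\mG^{(k)})| = W_k(\mG) \le \gamma(q^\dbdim,1)$ by Theorem~\ref{thm:extremal} (which as stated concerns exactly that complete graph on $q^\dbdim$ vertices). One should double-check the edge cases $k<\dbdim$, where $\bm{B}_{q,\dbdim}^k$ need not be all-ones, but since $\bm{B}_{q,\dbdim}^k$ is still a $0$/$1$ matrix and $\mG^{(k)}$ remains a path unique subgraph of \emph{some} subgraph of the complete loopy digraph on $q^\dbdim$ vertices, the bound $W_k(\mG)\le\gamma(q^\dbdim,1)$ still holds. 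Everything else is bookkeeping.
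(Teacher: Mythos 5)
Your proposal is correct and essentially identical to the paper's proof: both bound the number of length-$k$ walks in a path unique subgraph above by $\gamma(q^\dbdim,1)$ (via Lemma~\ref{lem:adj}, noting $\bm{A}^{ki}$ stays $0$/$1$ so $\bm{A}^k$ is the adjacency matrix of a path unique subgraph of the complete loopy digraph on $q^\dbdim$ vertices) and below by $q^{\dbdim+k}-\eta(q,\dbdim,k)\,|\mS|$ via a union bound over the removed edges, then rearrange. Your extra care about the case $k<\dbdim$ and the combinatorial justification of path uniqueness of the $k$-step graph are fine and match the paper's matrix-power argument.
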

\begin{proof}
  Let $\mG = (\mV_{q, \dbdim},\mE) \in \Gamma(q, \dbdim)$ be an arbitrary (but fixed) path unique subgraph of $\mB_{q,\dbdim} = (\mV_{q, \dbdim}, \mE_{q, \dbdim})$ with adjacency matrix $\bm{A} \in \{0, 1\}^{q^\dbdim\times q^\dbdim}$.
  By Lemma~\ref{lem:adj}, for any positive integers~$k,i$ it holds that $\bm{A}^k \in \{0, 1\}^{q^\dbdim \times q^\dbdim}$ and $\bm{A}^{ki} \in \{0, 1\}^{q^\dbdim \times q^\dbdim}$. The former means that $\bm{A}$ is an adjacency matrix of a graph and the latter implies that the graph is path unique. %
  By Theorem~\ref{thm:dunno}, we have $|\bm{A}^k| \leq \gamma(q^\dbdim, 1)$. Using this fact, we proceed to bound the number of edges in~$\mG$ from above.

 We define the set $\mS = \mE_{q, \dbdim} \setminus \mE $ to be the set of edges that are in $\mB_{q, \dbdim}$ but not in~$\mG$.
 The set $\mW$ of walks of length $k$ in $\mG$ is the set of walks of length $k$ in $\mB_{q,\dbdim}$ that do not traverse any edge in $\mS$. We have $|\mW| = |\bm{A}^k| \leq \gamma(q^\dbdim, 1)$ and $|\mS| = |\mE_{q, \dbdim}| - |\mE| = q^{\dbdim+1} - |\bm{A}|$.
Since any edge in $\mS$ is traversed by at most $\eta(q, \dbdim, k)$ distinct walks of length $k$, it holds that the number of walks in $\mW$ is larger than or equal to the total number of walks of length $k$, $q^{\dbdim+k}$, minus the number of walks that traverse edges in $\mS$, bounded by $\eta(q, \dbdim, k) |\mS|$. Therefore, we have $|\mW| \geq q^{\dbdim+k} - \eta(q, \dbdim, k) |\mS|$, which gives $|\mS| \geq \frac{q^{\dbdim+k}-|\mW|}{\eta(q, \dbdim, k)}$.
 Combining the above, we obtain
 \begin{align*}
    \gamma(q, \dbdim) &\leq |\bm{A}| 
                 = q^{\dbdim+1} - |S| 
                 \leq q^{\dbdim+1} - \frac{q^{\dbdim+k} - \gamma(q^\dbdim, 1)}{\eta(q, \dbdim, k)}
 \end{align*}
 concluding the proof.
\end{proof}

In order to explicitly evaluate the bound of Theorem~\ref{prop:genub}, we need Theorem~\ref{thm:extremal} and the following result.

\begin{restatable}{lemma}{lemmaincl} \label{lem:incl}
It holds that
\begin{align*}
    \eta(q,\dbdim,k) = \sum_{i=1}^{\left\lfloor \frac{\dbdim+k}{\dbdim+1}\right\rfloor} (-1)^{i+1}q^{\dbdim+k-i(\dbdim+1)}\binom{k-(i-1)\dbdim}{i}.
\end{align*}
\end{restatable}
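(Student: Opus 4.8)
The plan is to recast the quantity $\eta(q,\dbdim,k)$ as a question in combinatorics on words and then attack it with string‑correlation (Guibas--Odlyzko) techniques. A walk of length $k$ in $\mB_{q,\dbdim}$ is a word $\bm{s}=(s_1,\dots,s_{\dbdim+k})\in\Sigma_q^{\dbdim+k}$, and its $j$-th edge is the length-$(\dbdim+1)$ factor $(s_j,\dots,s_{j+\dbdim})$, for $j\in[k]$; since every length-$(\dbdim+1)$ factor of $\bm{s}$ starts at some position in $[k]$, a walk uses a given edge $\bm{e}\in\Sigma_q^{\dbdim+1}$ precisely when $\bm{e}$ occurs as a factor of $\bm{s}$. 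Hence $\eta(q,\dbdim,k)=\max_{\bm{e}}\bigl|\{\bm{s}\in\Sigma_q^{\dbdim+k}:\bm{e}\text{ occurs in }\bm{s}\}\bigr|$, the largest number of length-$(\dbdim+k)$ words containing a fixed length-$(\dbdim+1)$ pattern.

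For the lower bound I would take the \emph{unbordered} pattern $\bm{e}_0=(0,\dots,0,1)$ (a nontrivial border of length $b\le\dbdim$ would force its all-zero prefix to equal a suffix ending in $1$). Two occurrences of an unbordered word cannot overlap, so distinct occurrence positions of $\bm{e}_0$ in a word differ by at least $\dbdim+1$. Writing $A_j$ for the set of words in $\Sigma_q^{\dbdim+k}$ with $\bm{e}_0$ starting at position $j$ ($j\in[k]$), we get $|A_j|=q^{k-1}$, $A_j\cap A_{j'}=\emptyset$ whenever $0<|j-j'|\le\dbdim$, and for any $i$-element set $J\subseteq[k]$ of positions pairwise at distance $\ge\dbdim+1$ the constrained coordinates form $i$ disjoint blocks, giving $\bigl|\bigcap_{j\in J}A_j\bigr|=q^{\dbdim+k-i(\dbdim+1)}$. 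The number of such $J$ is $\binom{k-(i-1)\dbdim}{i}$ (via the shift $j_t\mapsto j_t-(t-1)\dbdim$), and inclusion--exclusion over $A_1,\dots,A_k$ then yields exactly $\bigl|\bigcup_jA_j\bigr|=\sum_{i\ge1}(-1)^{i+1}q^{\dbdim+k-i(\dbdim+1)}\binom{k-(i-1)\dbdim}{i}$, the claimed closed form. So $\eta(q,\dbdim,k)$ is at least this value, and it remains to show it is also an upper bound, i.e.\ that $\bm{e}_0$ is optimal.

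For the upper bound I would use the correlation-polynomial formalism. Let $c_{\bm{e}}(x)=\sum_{p}x^{p}$, summed over the periods $p\in\{0,1,\dots,\dbdim\}$ of $\bm{e}$ (the period $0$ is always present). The standard first-occurrence decomposition — every word either avoids $\bm{e}$, or factors uniquely as (a word whose \emph{only} occurrence of $\bm{e}$ is a suffix) followed by (an arbitrary word), together with the bijection between $\bm{e}$-avoiding words with $\bm{e}$ appended and pairs (first-occurrence word, period of $\bm{e}$) — gives $1/(1-qx)=A_{\bm{e}}(x)+B_{\bm{e}}(x)/(1-qx)$ and $x^{\dbdim+1}A_{\bm{e}}(x)=c_{\bm{e}}(x)B_{\bm{e}}(x)$, where $A_{\bm{e}}$ and $B_{\bm{e}}$ count, respectively, words avoiding $\bm{e}$ and words whose first occurrence of $\bm{e}$ is a suffix. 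Solving, $A_{\bm{e}}(x)=c_{\bm{e}}(x)/D_{\bm{e}}(x)$ and $B_{\bm{e}}(x)=x^{\dbdim+1}/D_{\bm{e}}(x)$ with $D_{\bm{e}}(x)=x^{\dbdim+1}+(1-qx)c_{\bm{e}}(x)$, so the generating function for words \emph{containing} $\bm{e}$ is $F_{\bm{e}}(x)=1/(1-qx)-A_{\bm{e}}(x)=x^{\dbdim+1}/\bigl((1-qx)D_{\bm{e}}(x)\bigr)$. A short computation then gives $F_{\bm{e}_0}(x)-F_{\bm{e}}(x)=\bigl(c_{\bm{e}}(x)-1\bigr)\cdot\bigl(x^{\dbdim+1}/D_{\bm{e}}(x)\bigr)\cdot\bigl(1/D_{\bm{e}_0}(x)\bigr)$, using $c_{\bm{e}_0}=1$. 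Here $c_{\bm{e}}(x)-1$ has nonnegative coefficients (it is a $0/1$ polynomial), $x^{\dbdim+1}/D_{\bm{e}}(x)=B_{\bm{e}}(x)$ is a counting generating function hence nonnegative, and $1/D_{\bm{e}_0}(x)=1/(1-qx+x^{\dbdim+1})$ equals $x^{-(\dbdim+1)}B_{\bm{e}_0}(x)$, again nonnegative. Thus every coefficient of $F_{\bm{e}_0}-F_{\bm{e}}$ is nonnegative; in particular the coefficient of $x^{\dbdim+k}$ is, which shows no pattern $\bm{e}$ is contained in more length-$(\dbdim+k)$ words than $\bm{e}_0$. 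Combined with the lower bound, this proves the lemma.

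I expect the main obstacle to be the extremality of the unbordered pattern: the algebraic identity for $F_{\bm{e}_0}-F_{\bm{e}}$ is routine, but one must carefully establish that $x^{\dbdim+1}/D_{\bm{e}}(x)$ and $1/D_{\bm{e}_0}(x)$ have nonnegative coefficients — which is exactly what the first-occurrence decomposition delivers, by exhibiting these series as combinatorial generating functions — and one must get the period/border bookkeeping in the correlation polynomial right. If a self-contained derivation of the Guibas--Odlyzko formula for $A_{\bm e}$ is not desired, these facts may instead be quoted from the literature on pattern occurrences in random strings, after which only the (routine) coefficient extraction for $\bm{e}_0$ remains.
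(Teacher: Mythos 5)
Your proposal is correct, and its skeleton matches the paper's: reformulate $\eta(q,\dbdim,k)$ as the maximum number of length-$(\dbdim+k)$ words over $\Sigma_q$ containing a fixed length-$(\dbdim+1)$ pattern, argue that a non-self-overlapping (unbordered) pattern attains the maximum, and then evaluate the count for that pattern by inclusion--exclusion over the occurrence positions, with the non-overlapping position sets counted by $\binom{k-(i-1)\dbdim}{i}$ and the free symbols by $q^{\dbdim+k-i(\dbdim+1)}$. The one genuine difference is in the extremality step: the paper simply cites Guibas--Odlyzko (the number of words avoiding a pattern increases with its autocorrelation, so the unbordered pattern maximizes containment), whereas you rederive this from scratch via the correlation-polynomial identities, solving for $A_{\bm e}=c_{\bm e}/D_{\bm e}$, $B_{\bm e}=x^{\dbdim+1}/D_{\bm e}$ and showing coefficientwise nonnegativity of $F_{\bm e_0}-F_{\bm e}=(c_{\bm e}-1)\,B_{\bm e}\,/D_{\bm e_0}$, which is a correct (and standard) argument. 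Your route buys a self-contained appendix at the cost of reproving a quoted result; the paper's citation is shorter but leans on the literature. Your inclusion--exclusion is also spelled out slightly more carefully than the paper's (you note explicitly that overlapping occurrence positions give empty intersections because the pattern is unbordered, and you give the shift bijection for the binomial count), but the computation is the same.
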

\begin{proof}
  We use the correspondence between walks in $\mB_{q, \dbdim}$ and sequences over $\Sigma_q$.
  To determine $\eta(q, \dbdim, k)$, we count the maximum number of sequences $\bm{b} \in \Sigma_q^{\dbdim+k}$ that contain some $\bm{a} \in \Sigma_q^{\dbdim+1}$ as a contiguous subsequence, i.e., $\eta(q,\dbdim,k)$ is equal to
  \begin{align*}
    \max_{\bm{a} \in \Sigma_q^{\dbdim+1}}\!\!\big|\!\big\{\bm{b}\!\in\!\Sigma_q^{\dbdim+k}\!:\! \bm{b}_{[\ell,\dbdim+1]}\!=\!\bm{a} \text{ for some } \ell\!\in\![k]\big\}\!\big|.
  \end{align*}%
  In~\cite[Section 7]{guibas1981string} it is shown that the number of sequences not containing a string monotonically increases with the autocorrelation of the string. 
  Hence, the maximum is attained for a sequence $\bm{a}$ with the lowest possible autocorrelation, i.e., a string that does not overlap itself, e.g. $\bm{a} = (1, 0, 0, \dots, 0)$.
  For $\dbdim+k < 2\dbdim$ we have $\eta(q, \dbdim, k) = k q^{\dbdim+k-i(\dbdim+1)}$ since there are $k$ positions where $\bm{a}$ can be placed in $\bm{b}$ and $q^{\dbdim+k-(\dbdim+1)}$ possibilities for choosing the remaining symbols of $\bm{b}$.
  In the general case, to avoid overcounting the sequences~$\bm{b}$ that contain $\bm{a}$ more than once, we use the Inclusion-Exclusion Principle to obtain
\begin{align*} 
    \eta(q,\dbdim,k) = \sum_{i=1}^{\lfloor \frac{\dbdim+k}{\dbdim+1} \rfloor}(-1)^{i+1} q^{\dbdim+k-i(\dbdim+1)}P_i,
\end{align*}
where $P_i$ denotes the number of possible positions to place~$i$ copies of $\bm{a}$ into $\bm{b}$. It is multiplied by the number of ways to choose the remaining symbols of $\bm{b}$, which is $q^{\dbdim+k-i(\dbdim+1)}$. 
Note that~$P_i$ is the same as the number of integer vectors $(x_1,\dots,x_i)$ with $1 \le x_1 \le \dots \le x_i \le \dbdim+k-i(\dbdim+1)$ (each $x_j$ can be seen as the position for which the length-$(\dbdim+k)$ vector $\bm{b}$ satisfies $\bm{b}_{[(j-1)(\dbdim+1)+x_j,\dbdim+1]}=\bm{a}$, i.e., the position of $\bm{b}$ in which $\bm{a}$ starts). Similar to the proof of Lemma~\ref{lem:constr1count} (also here see e.g.~\cite[Chapter II.5]{feller1991introduction} for a reference), we have $P_i = \binom{\dbdim+k-i(\dbdim+1)+i}{i}$. This proves the lemma.
\end{proof}

\begin{corollary} \label{cor:upperbound}
    By setting $k=\dbdim$ and $k=\dbdim+1$, respectively, in Theorem~\ref{prop:genub} we get the following bounds%
    \begin{align*}
        \gamma(q, \dbdim) \leq q^{\dbdim+1} \Big(1 &-\frac{3}{4\dbdim} + \frac{1}{2\dbdim q^\dbdim}\Big), \\
        \gamma(q, \dbdim) \leq q^{\dbdim+1} \Big(1 &- \frac{1}{\dbdim+1} + \frac{1}{4q(\dbdim+1)} +\\
        & \frac{1}{2(\dbdim+1)q^{\dbdim+1}} +\frac{1}{4(\dbdim+1)q^{2\dbdim+1}}\Big).
    \end{align*}
    In particular, for every fixed $\dbdim$ it holds that $$\lim_{q\rightarrow \infty}\frac{\gamma(q,\dbdim)}{q^{\dbdim+1}} \le \min\left\{1-\frac{1}{\dbdim+1},1-\frac{3}{4\dbdim}\right\}.$$
\end{corollary}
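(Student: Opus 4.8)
The plan is to obtain Corollary~\ref{cor:upperbound} as a direct specialization of Theorem~\ref{prop:genub}: substitute $k=\dbdim$ and $k=\dbdim+1$ into the inequality
\begin{align*}
\gamma(q,\dbdim)\le q^{\dbdim+1}-\frac{q^{\dbdim+k}-\gamma(q^\dbdim,1)}{\eta(q,\dbdim,k)},
\end{align*}
and then replace the two quantities on the right-hand side that are not yet in closed form, namely $\gamma(q^\dbdim,1)$ and $\eta(q,\dbdim,k)$, using Theorem~\ref{thm:extremal} and Lemma~\ref{lem:incl}, respectively. After that only algebraic simplification remains.

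First I would evaluate $\eta(q,\dbdim,k)$ for the two relevant choices of $k$. The key point is that the index in the alternating sum of Lemma~\ref{lem:incl} ranges only up to $\lfloor(\dbdim+k)/(\dbdim+1)\rfloor$, and for $k=\dbdim$ this equals $\lfloor 2\dbdim/(\dbdim+1)\rfloor=1$, while for $k=\dbdim+1$ it equals $\lfloor(2\dbdim+1)/(\dbdim+1)\rfloor=1$, since both ratios lie in $[1,2)$ for every $\dbdim\ge 1$. Hence in both cases the sum collapses to its $i=1$ term, giving $\eta(q,\dbdim,\dbdim)=\dbdim\,q^{\dbdim-1}$ and $\eta(q,\dbdim,\dbdim+1)=(\dbdim+1)\,q^{\dbdim}$.

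Next I would dispose of $\gamma(q^\dbdim,1)$. It appears with a positive sign inside the subtracted fraction, so for an upper bound on $\gamma(q,\dbdim)$ it suffices to overestimate it, and Theorem~\ref{thm:extremal} gives $\gamma(q^\dbdim,1)\le(q^\dbdim+1)^2/4$ irrespective of the parity of $q$ (for even $q$ one may instead use the exact value $q^\dbdim(q^\dbdim+2)/4$). Substituting $\eta(q,\dbdim,\dbdim)=\dbdim q^{\dbdim-1}$ and this estimate into Theorem~\ref{prop:genub} with $k=\dbdim$, expanding $q^{2\dbdim}-(q^\dbdim+1)^2/4=(3q^{2\dbdim}-2q^\dbdim-1)/4$, dividing by the monomial $\dbdim q^{\dbdim-1}$, and finally dividing through by $q^{\dbdim+1}$ yields the first displayed inequality; the same computation with $k=\dbdim+1$, $\eta(q,\dbdim,\dbdim+1)=(\dbdim+1)q^{\dbdim}$ and $q^{2\dbdim+1}-(q^\dbdim+1)^2/4$ yields the second. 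These are purely mechanical steps of collecting powers of $q$.

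For the asymptotic statement I would observe that both displayed inequalities hold for every $q$, so letting $q\to\infty$ with $\dbdim$ fixed sends the lower-order terms $1/(2\dbdim q^\dbdim)$, $1/(4(\dbdim+1)q)$, etc., to $0$; this gives $\limsup_{q\to\infty}\gamma(q,\dbdim)/q^{\dbdim+1}\le 1-3/(4\dbdim)$ from the first bound and $\le 1-1/(\dbdim+1)$ from the second, hence at most the minimum of the two, which is the asserted bound on the limit. I do not anticipate a genuine obstacle: the only two places where one could slip are the truncation of the inclusion--exclusion sum for $\eta$ to a single term and the parity bookkeeping when bounding $\gamma(q^\dbdim,1)$; everything else is routine arithmetic.
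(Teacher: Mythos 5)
Your proposal follows exactly the route the paper intends: substitute $k=\dbdim$ and $k=\dbdim+1$ into Theorem~\ref{prop:genub}, note that $\lfloor(\dbdim+k)/(\dbdim+1)\rfloor=1$ in both cases so Lemma~\ref{lem:incl} collapses to $\eta(q,\dbdim,\dbdim)=\dbdim q^{\dbdim-1}$ and $\eta(q,\dbdim,\dbdim+1)=(\dbdim+1)q^{\dbdim}$, bound $\gamma(q^\dbdim,1)$ via Theorem~\ref{thm:extremal}, simplify, and let $q\to\infty$; all of this is correct, and the second displayed inequality and the limit statement come out exactly as claimed. One small imprecision: since $\gamma(q^\dbdim,1)$ enters with a positive sign in the subtracted fraction, using the parity-uniform estimate $(q^\dbdim+1)^2/4$ makes the resulting upper bound \emph{weaker}, so your computation for $k=\dbdim$ yields $q^{\dbdim+1}\bigl(1-\tfrac{3}{4\dbdim}+\tfrac{1}{2\dbdim q^\dbdim}+\tfrac{1}{4\dbdim q^{2\dbdim}}\bigr)$ rather than the first display as stated; the stated form is exactly what one gets from the even-$q$ value $q^\dbdim(q^\dbdim+2)/4$, and for odd $q$ the extra term $\tfrac{1}{4\dbdim q^{2\dbdim}}$ remains (indeed the paper's own numerical evaluations for odd $q$ include it). This discrepancy is cosmetic and traces to the corollary's statement rather than to a flaw in your reasoning, and it has no effect on the asymptotic claim, since the extra term vanishes as $q\to\infty$.
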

For $\dbdim \to \infty$  the upper bound  on ${\gamma(q,\dbdim)}/{q^{\dbdim+1}}$ is 1.

\section{Connections to %
Universal Hitting Sets and Decycling Sets} \label{sec:connections}

In this section, we establish connections between Problem~\ref{prob:gamma} and existing problems in the literature, enabling us to prove that as $d$ tends to infinity, only a negligible fraction of the edges need to be removed from $\mB_{q,d}$ to make it path unique. Equivalently, only a negligible fraction of all length-$(d+1)$ labels are required to achieve full labeling capacity.  

In more detail, in this section, we highlight connections to \emph{universal hitting sets} and \emph{minimum decycling sets} of the de Bruijn graph.  
Through these connections and results from the literature, we prove the following theorem in two ways.

\begin{theorem} \label{thm:asymptotic}
  For all finite values of $q$, as $d$ tends to infinity, the maximum number of edges in a path unique %
  subgraph of the de Bruijn graph $\mathcal{B}_{q, d}$ tends to $q^{d+1}$, i.e.,
  \begin{align*}
    \lim_{d\to \infty} \frac{\gamma(q, d)}{q^{d+1}} = 1.
  \end{align*}
\end{theorem}

\subsection{Connections to Universal Hitting Sets}

We prove Theorem~\ref{thm:asymptotic} in the rest of this subsection by connecting Problem~\ref{prob:gamma} to the related problems of \emph{unavoidable sets of strings} (of constant length) \cite{champarnaud2004unavoidable}, and \emph{universal $k$-mer hitting sets} \cite{orenstein2016compact}.

\begin{notation} %
For natural numbers $m$ and $n$ with $m\leq n$, we say that a string $x \in \Sigma_q^n$ 
\emph{avoids} a set of strings $\mathcal{S} \subseteq \Sigma_q^m$, %
if no element of $\mathcal{S}$ is a contiguous substring of $x$. A set of strings is unavoidable by strings of length $n$ ($n$-unavoidable for short) if there does not exist a string $x\in \Sigma_q^n$ that avoids $\mathcal{S}$. 
\end{notation}

\begin{definition}[Minimum Universal $m$-mer Hitting Sets]
    Let $\Psi_{q, m, n} \subseteq \Sigma_q^m$ denote an $n$-unavoidable set of strings of length $m\leq n$ with least cardinality. Let $\psi(q, m, n)=|\Psi_{q, m, n}|$.
\end{definition}

We formally state the problem as follows.
\begin{problem} \label{prob:psi}
  Given positive integers $q, m$, and $n$, find $\psi(q, m, n)$ and the corresponding set $\Psi_{q, m, n}$.
\end{problem}

Universal $m$-mer Hitting Sets are related to path unique de Bruijn graphs by the following claim.

\begin{claim} \label{claim:relation_gamma_psi}
    It holds that $\gamma(q, d) \geq q^{d+1} - \psi(q, d+1, 2d+1)$.
\end{claim}
\begin{proof}
Consider the de Bruijn graph $\mathcal{B}_{q, d} = (\mathcal{V}_{q,d}, \mathcal{E}_{q,d})$. Construct the graph $\mathcal{G}_\Psi = (\mathcal{V}_{q,d}, \mathcal{E}_{q,d}\setminus \Psi_{q, d+1, 2d+1})$ on the same vertex set but with edges corresponding to strings in $\Psi_{q, d+1, 2d+1}$ removed. $\mathcal{G}_\Psi$ has $q^{d+1}-\psi(q, d+1, 2d+1)$ edges. It remains to show that $\mathcal{G}_\Psi$ is a path unique graph.
Walks corresponding to strings of length at most $2d$ are fully determined by the start and end vertex. Any walk corresponding to a string of length $2d+1$ in $\mathcal{B}_{q, d}$ is guaranteed to traverse an edge in $\Psi_{q, d+1, 2d+1}$ and is thus not in $\mathcal{G}_\Psi$.
\end{proof}

The values of $\psi(q, m, n)$ are of interest in the literature on \emph{minimizer sketches}, which refer to a certain way of deriving compact representations of long strings, for applications in computational biology. For a comprehensive overview we refer the reader to~\cite{zheng2022theory,zheng2023creating}. Next to a plethora of works \cite{marccais2017improving,orenstein2017designing,deblasio2019practical,ekim2020randomized,golan2024greedymini} on algorithms and heuristics for finding $n$-unavoidable sets of low cardinality, there are some theoretical bounds on $\psi(q, m, n)$. %
In \cite{marccais2018asymptotically}, it is demonstrated that for even $q$ and any fixed integer $w \geq 1$ it holds that $\lim_{m\to \infty} \frac{\psi(q, m, m+w-1)}{q^m} = \frac{1}{w}$. 
For any sufficiently large integer $m$, \cite{zheng2021lower} constructs a $(2m-1)$-unavoidable set of strings of length $m$ with small asymptotic size. We state this result~\cite[Theorem~2]{zheng2021lower} here for completeness.
\begin{theorem} \label{thm:forbiddenwordset}
    For sufficiently large integers $m$, it holds that 
    \begin{align*}
        \frac{\psi(q, m, 2m-1)}{q^m} \leq O\left(\frac{\ln(m)}{m}\right).
    \end{align*}
\end{theorem}

Using this result we can prove Theorem~\ref{thm:asymptotic}.

\begin{proof}[Proof of Theorem \ref{thm:asymptotic}]
    Combining Claim~\ref{claim:relation_gamma_psi} and Theorem~\ref{thm:forbiddenwordset}, we have %
    \begin{align*}
    \lim_{d\to\infty} \frac{\gamma(q, d)}{q^{d+1}} &\geq  1 - \lim_{d\to\infty} \frac{\psi(q, d+1, 2d+1)}{q^{d+1}} \\
                 &\geq 1 - \lim_{d\to\infty} O\left(\frac{\ln(d)}{d}\right) \\
                 &= 1. \qedhere
    \end{align*}%
\end{proof}

\subsection{Connections to Minimum Decycling Sets}

Minimum decycling sets of the de Bruijn graph are closely related to combinatorial \emph{necklaces}.

\begin{definition}[$q$-ary necklace]
A \textit{$q$-ary necklace of length~$m$} is a circular sequence of $m$ symbols from $\Sigma_q$. Two necklaces are said to be equivalent if one can be made equal to the other by rotation (without flipping it over). We will use the lexicographically smallest string in each equivalence class to represent the necklace.
\end{definition}

The number of $q$-ary necklaces of length $m$ is given by the necklace polynomial $\mathcal{N}(q, m) :=  \frac{1}{m} \sum_{i=1}^m q^{\mathrm{gcd}(i, m)}$, which is well approximated by $\mN(q, m) \approx {q^m}/{m}$ even for moderate $q$ and $m$.

The set of $q$-ary necklaces of length $m$ partitions the set of vertices of $\mB_{q, m}$ into $\mN(q, m)$ disjoint cycles. Every vertex is in exactly one of the cycles corresponding to the necklaces. Thus the minimum possible size of a decycling set of $\mB_{q, m}$ is at least $\mN(q, m)$. %

\begin{definition}[Minimum Decycling Set]
  A set of vertices $\mathcal{S} \in \Sigma_q^{m}$ in $\mB_{q, m}$ is called a minimum decycling set of $\mB_{q,m}$ if $|\mS| = \mN(q, m)$, and any cycle in $\mB_{q, m}$ visits at least one vertex in $\mS$. 
\end{definition}

A classic result by Mykkeltveit~\cite{mykkeltveit1972proof} shows that for any given $q$ and $m$, there exists a minimum decycling set, called the Mykkeltveit \emph{V-Set}. 
Equivalently, the Mykkeltveit V-set corresponds to a set of $\mN(q, m)$ strings of length $m$ that cannot be avoided  by infinitely long strings. It follows that $\psi(q, m, n) \geq \mathcal{N}(q, m)$ for all $n$. 

\begin{figure}[t]
  \centering
  \resizebox{0.4\textwidth}{!}{\definecolor{tableaublue}{HTML}{1F77B4}
\definecolor{tableauorange}{HTML}{FF7F0E}
\definecolor{tableaugreen}{HTML}{2CA02C}
\definecolor{tableaured}{HTML}{D62728}
\definecolor{tableaupurple}{HTML}{9467BD}
\definecolor{tableaubrown}{HTML}{8C564B}
\definecolor{tableaupink}{HTML}{E377C2}
\definecolor{tableaugrey}{HTML}{7F7F7F}
\definecolor{tableaulightgreen}{HTML}{BCBD22}
\definecolor{tableaucyan}{HTML}{17BECF}

\begin{tikzpicture}[->,>=stealth',shorten >=1pt,thick,round/.style={draw,circle,minimum size=5mm},]
\SetGraphUnit{2} 
\node[round, fill=white!40!tableauorange, dashed] at (-3,0) (A000) [] {${000}$};
\node[round, fill=white!40!tableaugreen, dashed] (A010) at (-1,0) [] {${010}$};
\node[round, fill=white!40!tableaublue] (A101) at (1,0) [] {$101$};
\node[round, fill=white!40!tableaupink, dashed] (A111) at (3,0) [] {${111}$};

\node[round, fill=white!40!tableaugreen] (A001) at (-3,2) [] {$001$};
\node[round, fill=white!40!tableaugreen] (A100) at (-3,-2) [] {$100$};

\node[round, fill=white!40!tableaublue] (A011) at (3,2) [] {$011$};
\node[round, fill=white!40!tableaublue, dashed] (A110) at (3,-2) [] {${110}$};

\Loop[dist=1cm,dir=EA,color=tableauorange](A000)
\Loop[dist=1cm,dir=WE,color=tableaupink](A111)
\draw[->,color=tableaugreen] (A100) to [bend left] node [above] {} (A001);
\draw[->] (A001) to [left] node [above] {} (A011);
\draw[->,color=tableaublue] (A011) to [bend left] node [above] {} (A110);
\draw[->] (A110) to [left] node [above] {} (A100);

\draw[->] (A100) to [left] node [above] {} (A000);
\draw[->] (A000) to [left] node [above] {} (A001);
\draw[->,color=tableaugreen] (A001) to [left] node [above] {} (A010);
\draw[->,color=tableaugreen] (A010) to [left] node [above] {} (A100);

\draw[->] (A010) to [bend left] node [above] {} (A101);
\draw[->] (A101) to [bend left] node [above] {} (A010);

\draw[->,color=tableaublue] (A101) to [left] node [above] {} (A011);
\draw[->,color=tableaublue] (A110) to [left] node [above] {} (A101);
\draw[->] (A011) to [left] node [above] {} (A111);
\draw[->] (A111) to [left] node [above] {} (A110);

\end{tikzpicture}}
\caption{\small The de Bruijn graph $\mB_{2, 3}$ with the cycles corresponding to necklaces highlighted by color; 000 in \textcolor{tableauorange}{\emph{orange}}, 001 in \textcolor{tableaugreen}{\emph{green}}, 011 in \textcolor{tableaublue}{\emph{blue}}, and 111 in \textcolor{tableaupink}{\emph{pink}}. %
The four nodes with dashed circles form a minimum decycling set of the graph.}
\label{fig:necklacepartition}
\end{figure}

\begin{figure*}[t]
  \begin{subfigure}{0.5\textwidth}
    \centering
  \resizebox{0.8\textwidth}{!}{
    \input{tikz/figures/extensionplot_1}
  }
  \vspace{-1em}
  \caption{\small Length of longest avoiding walk.}
  \label{fig:ext1}
\end{subfigure}
\hfill
\begin{subfigure}{0.5\textwidth}
  \centering
  \resizebox{0.8\textwidth}{!}{
    \begin{tikzpicture}[/tikz/background rectangle/.style={fill={rgb,1:red,1.0;green,1.0;blue,1.0}, fill opacity={1.0}, draw opacity={1.0}}, show background rectangle]
\begin{axis}[point meta max={nan}, point meta min={nan}, legend cell align={left}, legend columns={1}, title={}, title style={at={{(0.5,1)}}, anchor={south}, font={{\fontsize{14 pt}{18.2 pt}\selectfont}}, color={rgb,1:red,0.0;green,0.0;blue,0.0}, draw opacity={1.0}, rotate={0.0}, align={center}}, legend style={color={rgb,1:red,0.0;green,0.0;blue,0.0}, draw opacity={1.0}, line width={1}, solid, fill={rgb,1:red,1.0;green,1.0;blue,1.0}, fill opacity={1.0}, text opacity={1.0}, font={{\fontsize{8 pt}{10.4 pt}\selectfont}}, text={rgb,1:red,0.0;green,0.0;blue,0.0}, cells={anchor={center}}, at={(0.98, 0.98)}, anchor={north east}}, axis background/.style={fill={rgb,1:red,1.0;green,1.0;blue,1.0}, opacity={1.0}}, anchor={north west}, xshift={77.2mm}, yshift={-1.0mm}, 
  width={66mm}, height={66mm}, 
  scaled x ticks={false}, xlabel={$d$}, x tick style={color={rgb,1:red,0.0;green,0.0;blue,0.0}, opacity={1.0}}, x tick label style={color={rgb,1:red,0.0;green,0.0;blue,0.0}, opacity={1.0}, rotate={0}}, xlabel style={at={(ticklabel cs:0.5)}, anchor=near ticklabel, at={{(ticklabel cs:0.5)}}, anchor={near ticklabel}, font={{\fontsize{11 pt}{14.3 pt}\selectfont}}, color={rgb,1:red,0.0;green,0.0;blue,0.0}, draw opacity={1.0}, rotate={0.0}}, xmajorgrids={true}, xmin={0}, xmax={272}, xticklabels={{$0$,$50$,$100$,$150$,$200$,$250$}}, xtick={{0.0,50.0,100.0,150.0,200.0,250.0}}, xtick align={inside}, xticklabel style={font={{\fontsize{8 pt}{10.4 pt}\selectfont}}, color={rgb,1:red,0.0;green,0.0;blue,0.0}, draw opacity={1.0}, rotate={0.0}}, x grid style={color={rgb,1:red,0.8;green,0.8;blue,0.8}, draw opacity={1.0}, line width={0.5}, solid}, axis x line*={left}, x axis line style={color={rgb,1:red,0.0;green,0.0;blue,0.0}, draw opacity={1.0}, line width={1}, solid}, scaled y ticks={false}, ylabel={rel. cardinality}, y tick style={color={rgb,1:red,0.0;green,0.0;blue,0.0}, opacity={1.0}}, y tick label style={color={rgb,1:red,0.0;green,0.0;blue,0.0}, opacity={1.0}, rotate={0}}, ylabel style={at={(ticklabel cs:0.5)}, anchor=near ticklabel, at={{(ticklabel cs:0.5)}}, anchor={near ticklabel}, font={{\fontsize{11 pt}{14.3 pt}\selectfont}}, color={rgb,1:red,0.0;green,0.0;blue,0.0}, draw opacity={1.0}, rotate={0.0}}, ymajorgrids={true}, ymin={0}, ymax={0.6}, yticklabels={{$0.0$,$0.2$,$0.4$,$0.6$,$0.8$,$1.0$}}, ytick={{0.0,0.2,0.4,0.6000000000000001,0.8,1.0}}, ytick align={inside}, yticklabel style={font={{\fontsize{8 pt}{10.4 pt}\selectfont}}, color={rgb,1:red,0.0;green,0.0;blue,0.0}, draw opacity={1.0}, rotate={0.0}}, y grid style={color={rgb,1:red,0.8;green,0.8;blue,0.8}, draw opacity={1.0}, line width={0.5}, solid}, axis y line*={left}, y axis line style={color={rgb,1:red,0.0;green,0.0;blue,0.0}, draw opacity={1.0}, line width={1}, solid}, colorbar={false}]
    \addplot[color={rgb,1:red,0.1216;green,0.4667;blue,0.7059}, name path={299}, draw opacity={1.0}, line width={1}, solid, mark={*}, mark size={2.25 pt}, mark repeat={1}, mark options={color={rgb,1:red,0.0;green,0.0;blue,0.0}, draw opacity={1.0}, fill={rgb,1:red,0.1216;green,0.4667;blue,0.7059}, fill opacity={1.0}, line width={0.75}, rotate={0}, solid}]
        table[row sep={\\}]
        {
            \\
            3.0  0.5  \\
            4.0  0.375  \\
            5.0  0.25  \\
            6.0  0.21875  \\
            7.0  0.15625  \\
            8.0  0.140625  \\
            9.0  0.1171875  \\
            10.0  0.10546875  \\
            11.0  0.091796875  \\
            12.0  0.0859375  \\
            13.0  0.0771484375  \\
            14.0  0.0721435546875  \\
            15.0  0.06689453125  \\
            16.0  0.06280517578125  \\
        }
        ;
    \addlegendentry {V-set}
    \addplot[color={rgb,1:red,0.3392;green,0.5088;blue,0.1686}, name path={300}, draw opacity={1.0}, line width={1}, dotted, forget plot]
        table[row sep={\\}]
        {
            \\
            3.0  0.5  \\
            1.0  0.5  \\
        }
        ;
    \addplot[color={rgb,1:red,0.3392;green,0.5088;blue,0.1686}, name path={301}, draw opacity={1.0}, line width={1}, dotted, forget plot]
        table[row sep={\\}]
        {
            \\
            4.0  0.375  \\
            3.0  0.375  \\
        }
        ;
    \addplot[color={rgb,1:red,0.3392;green,0.5088;blue,0.1686}, name path={302}, draw opacity={1.0}, line width={1}, dotted, forget plot]
        table[row sep={\\}]
        {
            \\
            5.0  0.25  \\
            9.0  0.25  \\
        }
        ;
    \addplot[color={rgb,1:red,0.3392;green,0.5088;blue,0.1686}, name path={303}, draw opacity={1.0}, line width={1}, dotted, forget plot]
        table[row sep={\\}]
        {
            \\
            6.0  0.21875  \\
            16.0  0.21875  \\
        }
        ;
    \addplot[color={rgb,1:red,0.3392;green,0.5088;blue,0.1686}, name path={304}, draw opacity={1.0}, line width={1}, dotted, forget plot]
        table[row sep={\\}]
        {
            \\
            7.0  0.15625  \\
            25.0  0.15625  \\
        }
        ;
    \addplot[color={rgb,1:red,0.3392;green,0.5088;blue,0.1686}, name path={305}, draw opacity={1.0}, line width={1}, dotted, forget plot]
        table[row sep={\\}]
        {
            \\
            8.0  0.140625  \\
            33.0  0.140625  \\
        }
        ;
    \addplot[color={rgb,1:red,0.3392;green,0.5088;blue,0.1686}, name path={306}, draw opacity={1.0}, line width={1}, dotted, forget plot]
        table[row sep={\\}]
        {
            \\
            9.0  0.1171875  \\
            53.0  0.1171875  \\
        }
        ;
    \addplot[color={rgb,1:red,0.3392;green,0.5088;blue,0.1686}, name path={307}, draw opacity={1.0}, line width={1}, dotted, forget plot]
        table[row sep={\\}]
        {
            \\
            10.0  0.10546875  \\
            67.0  0.10546875  \\
        }
        ;
    \addplot[color={rgb,1:red,0.3392;green,0.5088;blue,0.1686}, name path={308}, draw opacity={1.0}, line width={1}, dotted, forget plot]
        table[row sep={\\}]
        {
            \\
            11.0  0.091796875  \\
            87.0  0.091796875  \\
        }
        ;
    \addplot[color={rgb,1:red,0.3392;green,0.5088;blue,0.1686}, name path={309}, draw opacity={1.0}, line width={1}, dotted, forget plot]
        table[row sep={\\}]
        {
            \\
            12.0  0.0859375  \\
            119.0  0.0859375  \\
        }
        ;
    \addplot[color={rgb,1:red,0.3392;green,0.5088;blue,0.1686}, name path={310}, draw opacity={1.0}, line width={1}, dotted, forget plot]
        table[row sep={\\}]
        {
            \\
            13.0  0.0771484375  \\
            141.0  0.0771484375  \\
        }
        ;
    \addplot[color={rgb,1:red,0.3392;green,0.5088;blue,0.1686}, name path={311}, draw opacity={1.0}, line width={1}, dotted, forget plot]
        table[row sep={\\}]
        {
            \\
            14.0  0.0721435546875  \\
            196.0  0.0721435546875  \\
        }
        ;
    \addplot[color={rgb,1:red,0.3392;green,0.5088;blue,0.1686}, name path={312}, draw opacity={1.0}, line width={1}, dotted, forget plot]
        table[row sep={\\}]
        {
            \\
            15.0  0.06689453125  \\
            217.0  0.06689453125  \\
        }
        ;
    \addplot[color={rgb,1:red,0.3392;green,0.5088;blue,0.1686}, name path={313}, draw opacity={1.0}, line width={1}, dotted, forget plot]
        table[row sep={\\}]
        {
            \\
            16.0  0.06280517578125  \\
            262.0  0.06280517578125  \\
        }
        ;
    \addplot[color={rgb,1:red,0.3392;green,0.5088;blue,0.1686}, name path={314}, const plot, draw opacity={1.0}, line width={1}, solid, mark={*}, mark size={2.25 pt}, mark repeat={1}, mark options={color={rgb,1:red,0.0;green,0.0;blue,0.0}, draw opacity={1.0}, fill={rgb,1:red,0.3392;green,0.5088;blue,0.1686}, fill opacity={1.0}, line width={0.75}, rotate={0}, solid}]
        table[row sep={\\}]
        {
            \\
            1.0  0.5  \\
            3.0  0.375  \\
            9.0  0.25  \\
            16.0  0.21875  \\
            25.0  0.15625  \\
            33.0  0.140625  \\
            53.0  0.1171875  \\
            67.0  0.10546875  \\
            87.0  0.091796875  \\
            119.0  0.0859375  \\
            141.0  0.0771484375  \\
            196.0  0.0721435546875  \\
            217.0  0.06689453125  \\
            262.0  0.06280517578125  \\
        }
        ;
    \addlegendentry {extended}
\end{axis}
\end{tikzpicture}
  }
  \vspace{-1em}
  \caption{\small Relative cardinality.}
  \label{fig:ext2}
\end{subfigure}
  \caption{\small Visualization of the Mykkeltveit V-set and its extension for the binary alphabet. The V-sets for $2\leq d \leq 16$ are extended until the length of the longest avoiding walk is $d+1$ (left figure). This increases $d$ but preserves the relative cardinality of the set (right figure). Since the relative cardinality of the V-set shrinks to zero as $d\to \infty$, so does the relative cardinality of the extended set.}
  \label{fig:extensionplot}
\end{figure*}

\begin{example}
  Let $q=2$ and $m=3$. Since $\mN(2, 3)=4$, there are four binary necklaces of length $3$, namely $000$, $001$, $011$, and $111$. Any other binary string of length $3$ is a cyclic shift of one of these four. Figure~\ref{fig:necklacepartition} shows how the necklaces partition the vertices of $\mB_{2, 3}$ into disjoint cycles. The Mykkeltveit V-Set for this example is $\{000, 010, 110, 111\}$. No walk of length $3$ avoids this set.  
\end{example}

Let $M_q(m)$ denote the length of the longest path avoiding the \emph{V-set}.
It is shown in \cite{zheng2021lower} that $M_q(m)$ is at least $O(m^2)$ and at most $O(m^3)$.
A construction for a different minimum decycling set of $\mB_{q, m}$ is given in \cite{champarnaud2004unavoidable}. %
In \cite{marccais2024sketching} a set of operations is defined that transform minimum decycling sets into other minimum decycling sets. While the cardinality of the different decycling sets remains the same, the length of the longest avoiding path in the resulting graph differs. %
Interestingly, it is shown in~\cite{marccais2024sketching} that the minimum decycling set constructed according to Mykkeltveit~\cite{mykkeltveit1972proof} seems to have relatively short longest avoiding paths compared to other minimum decycling sets.

To build a connection between decycling sets and $\gamma(q, m)$,  we define a useful operation on de Bruin graphs. It maps a subset of the vertices in $\mB_{q,d}$ to a subset of the vertices in $\mB_{q,d+1}$, while preserving the relative cardinality of the subset with respect to the total number of vertices in the respective graph as well as the length of longest avoiding walk. This operation is also used in \cite{deblasio2019practical} where it is called the \emph{naive extension}.

\begin{definition} \label{def:extend}
  For a set of strings $\mathcal{S}\subseteq \Sigma_q^m$ of length $m$ and for $\kappa \ge 0$, let $$\extend^\kappa(\mathcal{S}) := \{(\bm{i},\bm{x}): \bm{i}\in \Sigma_q^\kappa, x \in \mathcal{S}\}$$ denote the set of strings of length $m+\kappa$ with a string from $\mathcal{S}$ as a suffix, where we let $\extend^0(\mathcal{S}) := \mathcal{S}$.
\end{definition}

The following claim is adapted from \cite[Theorem 1]{deblasio2019practical}. We state it in our notation and provide a proof for completeness.

\begin{claim} \label{claim:extend}
  If $\mathcal{S}$ is an $n$-unavoidable set of strings of length~$m$, then $\extend^\kappa(\mathcal{S})$ is an ($n+\kappa$)-unavoidable set of strings of length $m+\kappa$. Further, $|\extend^\kappa(\mathcal{S})| = q^k |\mathcal{S}|$.
\end{claim}
\begin{proof}
  We prove the claim by induction on $\kappa$. For $\kappa=0$, the statement holds trivially. Assume it holds for $\kappa-1$. Consider the de Bruijn graph $\mathcal{B}_{q, m+\kappa-1}$. Elements of $\extend^{\kappa-1}(\mathcal{S})$ are  in $\mathcal{B}_{q, m+\kappa-1}$, elements of $\extend^{\kappa}(\mathcal{S})$ correspond to incoming edges of these vertices. Consider a walk $w$ corresponding to a string of length $n+\kappa$, i.e. a walk that visits $n-m+2$ vertices and traverses $n-m+1$ edges.
There are two walks corresponding to a string of length $n+\kappa-1$ contained in $w$.
One starting at the first vertex of $w$, called $w^\prime_1$ and one starting at the second vertex of $w$ called $w^\prime_2$. By the induction hypothesis $w^\prime_2$ contains a vertex $v$ that is in $\extend^{\kappa-1}(\mathcal{S})$.
Since all incoming edges of $\extend^{\kappa-1}(\mathcal{S})$ are in $\extend^{\kappa}(\mathcal{S})$, and $w$ traverses an incoming edge of $v$, it traverses at least one edge from $\extend^{\kappa}(\mathcal{S})$. 
  Thus, any string of length $n+\kappa$ has a substring in $\extend^{\kappa}(\mathcal{S})$.

  The cardinality of $\extend^\kappa(\mathcal{S})$ follows straightforwardly from Definition \ref{def:extend}.
\end{proof}

In other words, $\extend^\kappa$ maps a set of  $\mathcal{S}$ in the de Bruijn graph $\mB_{q, d}$ to a set of vertices in $\mB_{q,d+\kappa}$, while preserving the relative number of vertices $\frac{\mathcal{S}}{q^m} = \frac{\extend^\kappa(\mathcal{S})}{q^{m+\kappa}}$ and without increasing the number of vertices and edges in the respective longest avoiding walks.

\begin{proof}[Proof of Theorem~\ref{thm:asymptotic}]
  We establish that for any integer $q\geq2$ and real number $\varepsilon>0$, there exists an $m_\varepsilon$, such that for $m\geq m_\varepsilon$, it holds that\begin{align}
    \frac{\psi(q, m, 2m-1)}{q^m} < \varepsilon. \label{eq:proofasymptoticclaim}
  \end{align}
  The theorem then follows from Claim \ref{claim:relation_gamma_psi}.

  We will use the fact that for fixed $q$ and $m$ the function $\psi(q, m, n)$ is non-increasing in $n$ and that for any set $\mathcal{S}$, that is unavoidable by strings of length $n$, it holds that $\psi(q, m, n) \leq |\mathcal{S}|$. Specifically, for the Mykkeltveit V-Set we get $\psi(q, m, M_q(m)+1) = \mathcal{N}(q, m)$.

  For any $q, m$ and $\kappa\geq 1$ it holds that \begin{align*}
    &\frac{\psi(q, m+\kappa, 2(m+\kappa)-1)}{q^{m+\kappa}}  \\
    &\quad \quad \quad \leq \frac{\psi(q, m+\kappa, 2m+\kappa-1)}{q^{m+\kappa}}  \\
    &\quad \quad \quad \leq \frac{\psi(q, m, 2m-1)}{q^{m}},
  \end{align*}
  where the first step follows because $\psi(q, m, n)$ is non-increasing in $n$ and the second step follows from Claim~\ref{claim:extend} by considering the $\kappa$-fold extension of $\Psi(q, m, 2m-1)$.
  Thus, if \eqref{eq:proofasymptoticclaim} holds for $m=m_\varepsilon$, it holds for all $m \geq m_\varepsilon$.

\begin{figure*}[t]
  \begin{subfigure}{0.5\textwidth}
  \resizebox{\columnwidth}{!}{
    \input{tikz/figures/relative_number_of_edges_over_d_journal.tex}
  }
  \vspace{-1em}
  \caption{\small Relative number of edges over $d$.}
  \label{fig:q_2_4}
\end{subfigure}
\hfill
\begin{subfigure}{0.5\textwidth}
  \resizebox{\columnwidth}{!}{
    \input{tikz/figures/relative_number_of_edges_over_q_journal.tex}
  }
  \vspace{-1em}
  \caption{\small Relative number of edges over q.}
  \label{fig:m_2}
\end{subfigure}
\caption{\small Comparison of the upper bound (Thm.~\ref{prop:genub}) and the lower bounds (Thm.~\ref{prop:genlb} and Thm.~\ref{prop:lbm2}) on $\gamma(q, d)$. Note that Thm.~\ref{prop:lbm2} only applies for $d=2$. Constructions found by computer search for small $q$ and $d$ are shown for comparison (greedy, BLP, perm.).}
\label{fig:comparison}
\end{figure*}
  Let $m^\prime$ be the smallest positive integer, such that $\frac{\mN(q, m^\prime)}{q^{m^\prime}} < \varepsilon$. Such an integer exists since $\lim_{m\to \infty} \frac{\mathcal{N}(q, m)}{q^m} = 0$. 
  For notational convenience let $n^\prime := M_q(m^\prime)+1$.
  If $n^\prime \leq 2m^\prime-1$, we have
  \begin{align*}
      \frac{\psi(q, m^\prime, 2m^\prime-1)}{q^{m^\prime}} \leq \frac{\psi(q, m^\prime, n^\prime)}{q^{m^\prime}} < \varepsilon,
  \end{align*} and
  \eqref{eq:proofasymptoticclaim} holds with $m_\varepsilon = m^\prime$.

  If not, we consider the $(\kappa=n^\prime-2m^\prime + 1)$-fold extension of the Mykkeltveit V-set for $q=q$ and $m=m^\prime$ and let $m_\varepsilon = m^\prime + \kappa$.
  We have\begin{align*}
      & \frac{\psi(q, m_\varepsilon, 2m_\varepsilon-1)}{q^{m_\varepsilon}} \\
      & \quad \quad \quad = \frac{\psi(q, n^\prime - m^\prime + 1, 2n^\prime - 2m^\prime + 1)}{q^{n^\prime - m^\prime + 1}} \\
      & \quad \quad \quad = \frac{\psi(q, m^\prime + \kappa, n^\prime + \kappa)}{q^{m^\prime+ \kappa}} \\
     &\quad \quad \quad \leq \frac{\psi(q, m^\prime, n^\prime)}{q^{m^\prime}} \\
     &\quad \quad \quad < \varepsilon.
  \end{align*}
  Thus, in this case \eqref{eq:proofasymptoticclaim} holds with $m_\varepsilon = n^\prime -m^\prime + 1$.
\end{proof}

The walk length (in number of edges) and relative cardinality of this extension of the Mykkeltveit set are depicted in Figure~\ref{fig:extensionplot} for $q=2$. The cardinality is normalized by the total number of strings of the respective length.

\section{Numerical Comparisons} \label{sec:comparison}

In this section, we compare the lower bounds derived in Theorems~\ref{prop:genlb}~and~\ref{prop:lbm2} and the upper bound derived in Theorem~\ref{prop:genub}. 
Additionally, we include results obtained by computer search for small values of~$q$ and $\dbdim$.

Fig.~\ref{fig:q_2_4} displays the cases $2\leq q \leq 4$ over $\dbdim$.
For large~$d$, the lower bound in Theorem~\ref{prop:genlb} tends to ${1}/{2} + {1}/{2q}$ while the upper bound in Theorem~\ref{prop:genub} tends to $1$. 
From Theorem~\ref{thm:asymptotic}, it is clear that the upper bound is asymptotically tight. Fig.~\ref{fig:m_2} compares the results for $2\leq\dbdim\leq4$ over $q$. Theorem~\ref{prop:lbm2} gives a tighter bound than Theorem~\ref{prop:genlb} for finite values. Asymptotically in $q$, both lower bounds converge to $1/3$, while the upper bound tends to $0.625$. Additionally, results from computer search are included as elaborated next.

We conduct two types of computer search for path unique subgraphs of de Bruijn graphs $\mB_{q, \dbdim}$ for small values of $q$ and $\dbdim$. The first is loosely based on~\cite[Lemma~5]{zhan2012Extremal}. We search for permutation matrices $\bm{\Pi}$ such that the upper triangle of the permuted adjacency matrix $\mathrm{triu}(\bm{\Pi}^{-1} \bm{B}_{q, \dbdim} \bm{\Pi})$ is the adjacency matrix of a path unique graph and contains a high number of 1's. 
The search algorithm is in the spirit of %
simulated annealing, cf.~\cite[Chapter~30]{mackay2003information}. Starting with a random permutation, at each iteration a random swap is performed on the permutation matrix. The new permutation matrix is accepted if it results in a path unique matrix with a large number of 1's. If it does not, it is still accepted with a small probability; otherwise rejected. 

The second type of computer search is based on Claim~\ref{claim:relation_gamma_psi}. The problem of finding $\psi(q, d+1, 2d+1)$ is formulated as a minimum set cover problem, which is then either solved exactly as a binary linear program (with the mixed integer solver Gurobi) or approximately through greedy search. 
Formally, the problem can be written as
\begin{align}
  \min_{x \in \{0, 1\}^{q^{d+1}}} \sum_i x_i \text{ s.t. } BLP_{q, d} \cdot x \geq 1, \label{eq:BLP}
\end{align}
where the constraint matrix is given by
\begin{align*}
    BLP_{q, d} := \bigvee_{i=0}^d \boldsymbol{1}_{q^{d-i}} \otimes \boldsymbol{I}_{q^{d+1}} \otimes \boldsymbol{1}_{q^{i}},
\end{align*}
and $\boldsymbol{1}_x$ denotes the all-one column vector of $x$ elements, $\vee$ denotes the element-wise or operation, and $\otimes$ is the Kronecker product. 
The structure of $BLP_{q, d}$ arises naturally from arranging strings in lexicographic order.
The $(i,j)$th entry of the matrix $BLP_{q, d}$ equals $1$ whenever the lexicographically $j$th string of length $d+1$ is a substring of the lexicographically $i$th string of length $2d+1$.

The results of the computer search are reported in Table~\ref{tab:evaluations} along with numerical evaluations of our theoretical results.
For $q=2$ with $d\in\{2, 3, 4\}$ and $q=3$ with $\dbdim=2$ direct exhaustive search for path unique allocation matrices is feasible. 
The results for these values coincide with the search for permutations as reported in the table.

\begin{table}[t]
    \rowcolors{3}{gray!20}{white}
    \centering
    \renewcommand\arraystretch{1.1}
    \begin{tabularx}{0.49\textwidth}{ X c c c c c c c } 
      \hiderowcolors
  \multirow{2}{*}{$q$} & \multirow{2}{*}{$\dbdim$} & \multicolumn{2}{c}{LBs} & UB & \multicolumn{3}{c}{computer search (LBs)} \\
      & & Thm.~\ref{prop:genlb} & Thm.~\ref{prop:lbm2} &  Thm.~\ref{prop:genub} & BLP & greedy & perm.
  \\\noalign{\global\arrayrulewidth 1.8pt}
    \hline
    \noalign{\global\arrayrulewidth0.4pt}
      \showrowcolors
2  &  2  &  5$^*$  &  4  &  5$^*$  &  3  &  3  &  5$^*$ \\ \hline
2  &  3  &  11$^*$  &   -   &  12  &  9  &  8  &  11$^*$ \\ \hline
2  &  4  &  23  &   -   &  26  &  20  &  19  &  24$^*$ \\ \hline
2  &  5  &  47  &   -   &  54  &  45  &  44  &  50 \\ \hline
2  &  6  &  95  &   -   &  112  &  94  &  93  &  102 \\ \hline
2  &  7  &  191  &   -   &  228  &  198  &  193  &  210 \\ \hline
2  &  8  &  383  &   -   &  462  &   -   &  404  &  428 \\ \hline
2  &  9  &  767  &   -   &  934  &   -   &  817  &  864 \\ \hline
2  &  10  &  1535  &   -   &  1885  &   -   &  1670  & - \\ \hline
2  &  11  &  3071  &   -   &  3797  &   -   &  3370  & - \\ \hline
2  &  12  &  6143  &   -   &  7640  &   -   &  6878  & - \\ \hline
3  &  2  &  14  &  15$^*$  &  17  &  12  &  11  &  15$^*$ \\ \hline
3  &  3  &  49  &   -   &  61  &  46  &  44  &  53 \\ \hline
3  &  4  &  156  &   -   &  197  &   -   &  153  &  174 \\ \hline
3  &  5  &  479  &   -   &  617  &   -   &  499  &  553 \\ \hline
3  &  6  &  1450  &   -   &  1900  &   -   &  1559  &  1712 \\ \hline
3  &  7  &  4365  &   -   &  5809  &   -   &  4924  & - \\ \hline
4  &  2  &  30  &  33  &  41  &  28  &  26  &  34 \\ \hline
4  &  3  &  145  &   -   &  192  &   -   &  142  &  164 \\ \hline
4  &  4  &  619  &   -   &  832  &   -   &  638  &  680 \\ \hline
4  &  5  &  2532  &   -   &  3456  &   -   &  2797  &  2969 \\ \hline
5  &  2  &  55  &  63  &  79  &  55  &  52  &  64 \\ \hline
5  &  3  &  340  &   -   &  469  &   -   &  347  &  372 \\ \hline
5  &  4  &  1819  &   -   &  2531  &   -   &  1933  & - \\ \hline
6  &  2  &  91  &  107  &  136  &  96  &  91  &   - \\ \hline
6  &  3  &  686  &   -   &  973  &   -   &  720  &   - \\ \hline
6  &  4  &  4410  &   -   &  6285  &   -   &  4881  &   - \\ \hline
7  &  2  &  140  &  165  &  216  &   -   &  147  &   - \\ \hline
7  &  3  &  1246  &   -   &  1801  &   -   &  1337  &   - \\ \hline
8  &  2  &  204  &  240  &  322  &   -   &  216  &   - \\ \hline
8  &  3  &  2094  &   -   &  3073  &   -   &  2262  &   - \\ \hline
9  &  2  &  285  &  334  &  457  &   -   &  308  &   - \\ \hline
9  &  3  &  3315  &   -   &  4922  &   -   &  3676  &   - \\ \hline
10  &  2  &  385  &  449  &  627  &   -   &  417  &   - \\ \hline
10  &  3  &  5005  &   -   &  7501  &   -   &  5571  &   - \\ \hline
\end{tabularx}
\caption{\small Evaluations of the lower bounds (LBs) and the upper bound (UB) on $\gamma(q, d)$, derived in Theorems~\ref{prop:genlb},~\ref{prop:lbm2},~and~\ref{prop:genub} for small values of $q$ and~$\dbdim$.
Additionally, the results of the computer search are shown, where available. Exact and greedy solutions to \eqref{eq:BLP} are shown as ``BLP'' and ``greedy'', respectively, while ``perm.'' denotes the computer search for permutation matrices. An asterisk indicates that a result matches exhaustive search and is thus optimal.} %
\label{tab:evaluations}
\end{table}

Fig.~\ref{fig:q_inf} displays the asymptotic results for large alphabet sizes~$q$. For $\dbdim=2$, both lower bounds tend to ${1}/{3}$, while the upper bound from Corollary~\ref{cor:upperbound} tends to ${5}/{8}$.
For large $\dbdim$, the lower bound from Theorem~\ref{prop:genlb} tends to ${1}/{2}$ while the upper bound tends to $1$.

\begin{figure}[t]
  \centering
  \resizebox{0.4\textwidth}{!}{ \begin{tikzpicture}
\begin{axis}[
 width={100mm},
 height={60mm},
 xlabel={$\dbdim$},
 xmajorgrids={true},
 xticklabels={{$2$,$3$,$4$,$5$,$6$,$7$,$8$,$9$,$10$}},
 xtick={{2.0,3.0,4.0,5.0,6.0,7.0,8.0,9.0,10.0}},
 ylabel={$\lim_{q\to\infty}\gamma(q, \dbdim)/q^{\dbdim+1}$},
 ymajorgrids={true},
 ymin={0},
 ymax={1},
 yticklabels={{$0.0$,$0.1$,$0.2$,$0.3$,$0.4$,$0.5$,$0.6$,$0.7$,$0.8$,$0.9$,$1.0$}},
 ytick={{0.0,0.1,0.2, 0.3, 0.4,0.5, 0.6,0.7,0.8,0.9,1.0}},
 reverse legend,
legend style={at={(0.98,0.02)},anchor=south east}
 ]
    \addplot[color={rgb,1:red,0.9882;green,0.4902;blue,0.0431}, name path={64ed2383-7331-45b4-9b10-f3c7e7aa79cf}, draw opacity={1.0}, line width={1}, solid, mark={*}, mark size={3.0 pt}, mark repeat={1}, mark options={color={rgb,1:red,0.0;green,0.0;blue,0.0}, draw opacity={1.0}, fill={rgb,1:red,0.9882;green,0.4902;blue,0.0431}, fill opacity={1.0}, line width={0.75}, rotate={0}, solid}]
        table[row sep={\\}]
        {
            \\
            2.0  0.3333333333333333  \\
            3.0  0.4583333333333333  \\
            4.0  0.49166666666666664  \\
            5.0  0.4986111111111111  \\
            6.0  0.4998015873015873  \\
            7.0  0.4999751984126984  \\
            8.0  0.4999972442680776  \\
            9.0  0.49999972442680773  \\
            10.0  0.4999999749478916  \\
        }
        ;
    \addlegendentry {LB Thm. 3}
        \addplot[color={rgb,1:red,0.6392;green,0.6745;blue,0.7255}, name path={097c6c6a-cd27-492d-92ff-84e82f1fecdd}, draw opacity={1.0}, line width={1}, solid, mark={triangle*}, mark size={3.5 pt}, mark repeat={1}, mark options={color={rgb,1:red,0.0;green,0.0;blue,0.0}, draw opacity={1.0}, fill={rgb,1:red,0.6392;green,0.6745;blue,0.7255}, fill opacity={1.0}, line width={0.75}, rotate={0}, solid}]
        table[row sep={\\}]
        {
            \\
            2.0  0.3333333333333333  \\
        }
        ;
    \addlegendentry {LB Thm. 4}
\addplot[color={rgb,1:red,0.0667;green,0.4392;blue,0.6667}, name path={171deb0e-aeb1-4913-a157-d5cd5bd0dab5}, draw opacity={1.0}, line width={1}, solid, mark={square*}, mark size={3.0 pt}, mark repeat={1}, mark options={color={rgb,1:red,0.0;green,0.0;blue,0.0}, draw opacity={1.0}, fill={rgb,1:red,0.0667;green,0.4392;blue,0.6667}, fill opacity={1.0}, line width={0.75}, rotate={0}, solid}]
        table[row sep={\\}]
        {
            \\
            2.0  0.625  \\
            3.0  0.75  \\
            4.0  0.8  \\
            5.0  0.8333333333333334  \\
            6.0  0.8571428571428571  \\
            7.0  0.875  \\
            8.0  0.8888888888888888  \\
            9.0  0.9  \\
            10.0  0.9090909090909091  \\
        }
        ;
    \addlegendentry {UB Cor.~~3}

\end{axis}
\end{tikzpicture}}
\caption{\small Upper and lower bounds on the relative number of edges in an optimal path unique subgraph of the de Bruijn graph as the alphabet size tends to infinity.}
  \label{fig:q_inf}
\end{figure}

\section{Conclusion} \label{sec:conclusion}

Generalizing a problem initially presented in~\cite{hanania2023capacity}, we analyzed the value of $\gamma(q, \dbdim)$, which is the maximum number of edges in a path unique subgraph of the de Bruijn graph $\mB_{q, \dbdim}$.  We presented two constructions. The first provides a lower bound on $\gamma(q, \dbdim)$ for any $q$ and $\dbdim$, while the second is restricted to $\dbdim=2$. Additionally, we proved an upper bound on $\gamma(q, \dbdim)$. By drawing connections to other problems from the literature we showed that $\gamma(q, \dbdim)/q^{d+1}$ tends to $1$ as $\dbdim \to \infty$.

The results in this paper provide a significant step towards finding the values of $\gamma(q,\dbdim)$. 
Nevertheless, there remains an interesting gap between our lower and upper bounds for the vast majority of parameters. More progress on closing this gap, thus determining the exact values of $\gamma(q, \dbdim)$, remains for future work.

\bibliographystyle{IEEEtran}
\bibliography{main,refs_HananiaJournal}

\end{document}